\theoremstyle{thmstyleone}%
\newtheorem{theorem}{Theorem}
\newtheorem{corollary}{Corollary}
\theoremstyle{thmstyletwo}%
\newtheorem{example}{Example}%
\theoremstyle{thmstylethree}%
\newtheorem{definition}{Definition}%
\newlength\opcodesnippetwidth
\newcommand{\nonTerm}[1]{\ensuremath{\langle\mathit{#1}\rangle}}
\newcommand{\draftcomment}[3]{{\color{#1}[#2: #3]} 
  \PackageWarning{WARNING: Draft comments visible}{#2: #3}}
\newcommand{\pr}[1]{\marginpar{\draftcomment{blue}{PR}{#1}}}
\newcommand{\dg}[1]{\marginpar{\draftcomment{blue}{DG}{#1}}}
\newcommand{\mdg}[1]{\marginpar{\draftcomment{blue}{DG}{#1}}}
\newcommand{\ze}[1]{\marginpar{\draftcomment{blue}{ZE}{#1}}}
\newcommand{\cl}[1]{\marginpar{\draftcomment{blue}{CL}{#1}}}
\newcommand{\ja}[1]{\marginpar{\draftcomment{blue}{JA}{#1}} }
\newcommand{\set}[1]{\left\{ #1 \right\}}
\newcommand{\defeq}{\stackrel{\mathsf{def}}{=}}
\newcommand{\typeJudg}[2]{#1 : #2}
\newcommand{\instropdef}[1]{$\Omega_{#1} = (G_{\mathit{#1}}, R_{\mathit{#1}}, I_{\mathit{#1}})$}
\newcommand{\maxinstrop}{\Omega_{\mathit{max}}}
\newcommand{\suminstrop}{\Omega_{\mathit{sum}}}
\newcommand{\forallinstrop}{\Omega_{\forall,P}}
\newcommand{\ncaninstrop}[1]{\Omega_{\mathit{nc}}(#1)}
\newcommand{\caninstrop}[1]{\Omega_{\mathit{c}}(#1)}
\newcommand{\ncaninstropgen}{\ncaninstrop{(M, \circ, e), h}}
\newcommand{\caninstropgen}{\caninstrop{(M, \circ, e), h}}
\newcommand{\ncaninstropnoargs}{\Omega_{\mathit{nc}}}
\newcommand{\caninstropnoargs}{\Omega_{\mathit{c}}}
\newcommand{\figstop}{}
\newcommand{\eldarica}{\textsc{Eld\-arica}}
\newcommand{\tricera}{\textsc{Tri\-Cera}}
\newcommand{\cpachecker}{\textsc{CPA\-checker}}
\newcommand{\seahorn}{\textsc{Sea\-Horn}}
\newcommand{\jayhorn}{\textsc{Jay\-Horn}}
\newcommand{\rusthorn}{\textsc{Rust\-Horn}}
\newcommand{\monocera}{\textsc{Mono\-Cera}}
\newcommand{\veriabs}{\textsc{Veri\-Abs}}
\newcommand{\korn}{\textsc{Korn}}
\newcommand{\monoc}{\textsc{Mono}}
\newcommand{\tri}{\textsc{Tri}}
\newcommand{\sea}{\textsc{Sea}}
\newcommand{\cpa}{\textsc{CPA}}
\definecolor{mGreen}{rgb}{0,0.6,0}
\definecolor{mGray}{rgb}{0.5,0.5,0.5}
\definecolor{mPurple}{rgb}{0.58,0,0.82}
\definecolor{backgroundColour}{rgb}{0.95,0.95,0.92}
\definecolor{backgroundColourOP}{rgb}{0.97,0.97,0.94}
\lstdefinestyle{CStyle}{
    backgroundcolor=\color{backgroundColour},   
    commentstyle=\color{mGreen},
    keywordstyle=\color{magenta},
    numberstyle=\tiny\color{mGray},
    stringstyle=\color{mPurple},
    basicstyle=\ttfamily,
    breakatwhitespace=false,         
    breaklines=true,                 
    captionpos=b,                    
    keepspaces=true,                 
    numbers=left,                    
    numbersep=5pt,                  
    showspaces=false,                
    showstringspaces=false,
    showtabs=false,                  
    tabsize=2,
    language=C
}
\lstdefinelanguage{ExtWhile}{
    morekeywords={skip, if, else, while, for, assert, assume, const, store, select, aggregate, int},
    sensitive=false,
    morecomment=[l]{//},
    morecomment=[s]{/*}{*/},
}
\lstdefinestyle{ExtWhileStyle}{
    backgroundcolor=\color{backgroundColour},   
    commentstyle=\color{mGreen},
    keywordstyle=\color{magenta},
    numberstyle=\tiny\color{mGray},
    stringstyle=\color{mPurple},
    basicstyle=\ttfamily\small,
    breakatwhitespace=false,
    breaklines=true,
    captionpos=b,
    keepspaces=true,
    numbers=left,
    numbersep=5pt,
    showspaces=false,
    showstringspaces=false,
    showtabs=false,
    tabsize=2,
    language=ExtWhile
}
\lstdefinestyle{ExtWhileStyleOp}{
    backgroundcolor=\color{backgroundColourOP},   
    commentstyle=\color{mGreen},
    keywordstyle=\color{magenta},
    numberstyle=\tiny\color{mGray},
    stringstyle=\color{mPurple},
    basicstyle=\ttfamily\scriptsize,
    breakatwhitespace=false,
    breaklines=true,
    captionpos=b,
    keepspaces=true,
    numbers=left,
    numbersep=5pt,
    showspaces=false,
    showstringspaces=false,
    showtabs=false,
    tabsize=2,
    language=ExtWhile
}
\definecolor{nodeGray}{rgb}{0.96,0.96,0.96}
\definecolor{cornerGray}{rgb}{0.90,0.90,0.87}
\definecolor{beaublue}{rgb}{0.74, 0.83, 0.9}
\definecolor{mistyrose}{rgb}{1.0, 0.89, 0.88}
\tikzstyle{nodeone} = [rounded corners, text width=1.7cm, minimum height=1.8cm,text centered, draw=black, fill = nodeGray, font = \scriptsize]
\tikzstyle{nodetwo} = [rounded corners, text width=1.5cm, minimum height=2.7cm, minimum width = 5.5cm, text centered, draw=black, fill = beaublue, font = \scriptsize]
\tikzstyle{cornernode} = [rectangle, below right,draw, fill=cornerGray]
\tikzstyle{arrow} = [thick,->,>=stealth]
\newcommand{\assign}{\;\texttt{=}\;}
\newcommand{\eqeq}{\;\texttt{==}\;}
\newcommand{\semic}{\texttt{;}\;}
\newcommand{\add}{\;\texttt{+}\;}
\newcommand{\mult}{\;\texttt{*}\;}
\newcommand{\twopartdef}[3]
{
	\left\{
		\begin{array}{ll}
			#1 & \mbox{if } #2 \\
			#3 & \mbox{otherwise}
		\end{array}
	\right.
}
\newtheorem{lemma}{Lemma}%
\newif\ifusebattery
\newif\ifuseimpl
\begin{document}

\title{A Program Instrumentation Framework for Automatic Verification
}



\author[1,$\ddagger$]{Jesper Amilon}

\author[2,$\ddagger$]{Zafer Esen}

\author[1,$\ddagger$]{Dilian Gurov}

\author[1,$\ddagger$]{Christian Lidstr\"om}

\author[2,3,$\ddagger$]{Philipp R\"ummer}

\author[1,$\ddagger$]{Marten Voorberg}

\affil[1]{KTH Royal Institute of Technology, Stockholm, Sweden}

\affil[2]{Uppsala University, Uppsala, Sweden}

\affil[3]{University of Regensburg, Regensburg, Germany}

\affil[$\ddagger$]{These authors contributed equally to this work.}





\maketitle

\abstract{
In deductive verification and software model checking,
dealing with certain specification language constructs can be problematic
when the back-end solver is not 
sufficiently powerful or lacks the required theories. 
One way to deal with this is to transform, for
verification purposes, the program to an equivalent one
not using the problematic constructs, and to reason about 
this equivalent program instead.
In this article, we propose \emph{program instrumentation} as a unifying verification paradigm that subsumes various existing ad-hoc approaches, has a clear formal correctness criterion, can be applied automatically, and can transfer back witnesses and counterexamples. 
%
%
We illustrate our approach on the automated verification of programs that involve quantification and aggregation operations over arrays, such as the maximum value or sum of the elements in a given segment of the
array, which are known to be difficult to reason about automatically.
%
%
We implement our approach in the \monocera\ tool, which is tailored to the verification of programs with aggregation, 
and evaluate it on example programs, including SV-COMP programs.
}



\section{Introduction}
\label{sec:introduction}

Program specifications are often written in expressive, high-level languages: for instance, in temporal logic~\cite{DBLP:reference/mc/2018}, in first-order logic with quantifiers~\cite{DBLP:books/daglib/0022394}, in separation logic~\cite{DBLP:conf/lics/Reynolds02}, or in specification languages that provide extended quantifiers for computing the sum or maximum value of array elements~\cite{DBLP:books/daglib/p/LeavensBR99,acsl}. Specifications commonly also use a rich set of theories; for instance, specifications could be written using full Peano arithmetic, as opposed to bit-vectors or linear arithmetic used in the program. Rich specification languages make it possible to express intended program behaviour in a succinct form, and as a result reduce the likelihood of mistakes being introduced in specifications.

There is a gap, however, between the languages used in specifications and the input languages of automatic verification tools. Software model checkers, in particular, usually require specifications to be expressed using program assertions and Boolean program expressions, and do not directly support any of the more sophisticated language features mentioned. In fact, rich specification languages are challenging to handle in automatic verification, since satisfiability checks can become undecidable (i.e., it is no longer decidable whether assertion failures can occur on a program path), and techniques for inferring program invariants usually focus on simple specifications only.

To bridge this gap, it is common practice to \emph{encode} high-level specifications in the low-level assertion languages understood by the tools. For instance, temporal properties can be translated to B\"uchi automata, and added to programs using ghost variables and assertions~\cite{DBLP:reference/mc/2018}; quantified properties can be replaced with non-determinism, ghost variables, or loops~\cite{DBLP:conf/sas/BjornerMR13,DBLP:conf/sas/MonniauxG16}; sets used to specify the absence of data-races can be represented using non-deterministically initialized variables~\cite{cell2010}. By adding ghost variables and bespoke ghost code to programs~\cite{DBLP:journals/fmsd/FilliatreGP16}, many specifications can be made effectively checkable.


The translation of specifications to assertions or ghost code is today largely designed, or even carried out, by hand. This is an error-prone process, and for complex specifications and programs it is hard to ensure that the low-level encoding of a specification faithfully models the original high-level properties to be checked. Mistakes have been found even in industrial, very carefully developed specifications~\cite{DBLP:conf/atva/PriyaZSVBG21},
and can result in assertions that are vacuously satisfied by any program. Naturally, the manual translation of specifications also tends to be an ad-hoc process that does not easily generalise to other specifications.

This article proposes the first general framework to automate the translation of rich program specifications to simpler program assertions, using a process called \emph{instrumentation.} Our approach models the semantics of specific complex operations using program-independent \emph{instrumentation operators,} consisting of (manually designed) rewriting rules that define how the evaluation of the operator can be achieved using simpler program statements and ghost variables. The instrumentation approach is flexible enough to cover a wide range of different operators, including operators that are best handled by weaving their evaluation into the program to be analysed. While instrumentation operators are manually written, their application to programs can be performed in a fully automatic way by means of a search procedure. The soundness of an instrumentation operator can be shown formally, once and for all, by providing an \emph{instrumentation invariant} that ensures that the operator can never be used to show correctness of an incorrect program.


\paragraph{Comparison to CAV paper}

This article extends our previous work~\cite{amilon-et-al-cav23} by including new instrumentation operators, now covering the full set of extended quantifiers in the specification language ACSL~\cite{acsl}. Instrumentation operators for extended quantifiers are presented more systematically using monoid homomorphisms.
A more comprehensive correctness result is included, where \emph{completeness} in the typical sense is shown for certain classes of programs. Several existing sections of the CAV paper have been extended by adding more thorough explanations or discussions.

\begin{figure}
\centering
\begin{minipage}{0.87\textwidth}
            \lstinputlisting[style=CStyle,basicstyle=\ttfamily\small,morekeywords={assert}]{examples/battery_diag_for.c}
            \caption{Example program calculating the maximum voltage in a battery pack consisting of an array of batteries\figstop{}}
            \label{fig:battery_example}
\end{minipage}
\end{figure}

\subsection{Motivating Examples}\label{sec:motivating_examples}
We illustrate our approach on two simple examples.

\paragraph{Array Aggregation}
Consider first the program shown in \autoref{fig:battery_example}, which operates on a battery pack powering the electric engine of a vehicle. Although this particular implementation is written by us, it is conceptually close to (significantly more complicated) code we encountered in collaborations with automotive companies. The battery pack is represented as a C~struct, consisting of an array of batteries (representing the measured voltage at each battery) and a field for the maximum voltage value among the batteries. In the program, the \lstinline$calc_pack_max$ function calculates the maximum value by looping through the array, and in the main function, correctness of \lstinline$calc_pack_max$ is asserted by relying on the \lstinline$\max$ operator. Intuitively, \lstinline$\max(a, l, u)$ extracts the maximum value of the array \lstinline$a$ in the interval [\lstinline$l$, \lstinline$u$).

\begin{figure}[!tb]
    \centering
    \begin{minipage}{0.87\linewidth}
\begin{class}{\bm{\Omega}_{\mathit{max}} \textbf{
(Instrumentation operator)}}
\begin{schema}{\mathbf{G_{\mathit{max}}} \textbf{ (Ghost variables)}}
    \mathtt{ag\_lo},~\mathtt{ag\_hi},~
    \mathtt{ag\_max:~Int},
    \ \mathtt{ag\_ar:~Array\ Int}
    \where
    \mathtt{init(ag\_lo)} \,=\,
    \mathtt{init(ag\_hi)} \,=\, 0,\;
    \mathtt{init(ag\_max)} \,=\, -\infty,\;
    \mathtt{init(ag\_ar)} \,=\, \_
\end{schema}
\\ 
\begin{schema}{\mathbf{R_{\mathit{max}}} \textbf{ (Rewrite~rules)}}
\begin{array}{l@{\quad}l@{\quad}l@{\quad\qquad}l}
     \texttt{a' = store(a, i, x)}  &\leadsto&  & \\
     \multicolumn{4}{l}{\hspace{2em}\mbox{\lstinputlisting[linewidth=\opcodesnippetwidth, style=ExtWhileStyleOp]{examples/max_rewrite_store2.txt}}}\\
     \texttt{x = select(a, i)}  &\leadsto&
        \text{code similar to rewrites of \texttt{store}} & \\
     \texttt{r = \textbackslash max(a, l, u)}  &\leadsto&  & \\
     \multicolumn{4}{l}{\hspace{2em}\mbox{\lstinputlisting[linewidth=\opcodesnippetwidth, style=ExtWhileStyleOp]{examples/max_rewrite_assert.txt}}}\\
\end{array}
\end{schema}
\\ 
\begin{schema}{\mathbf{I_{\mathit{max}}}\textbf{ (Instrumentation invariant)}}
    \mathtt{ag\_lo} = \mathtt{ag\_hi} \lor 
    \mathtt{ag\_max} = \mathtt{\backslash max (ag\_ar, ag\_lo, ag\_hi)}
\end{schema}
\end{class}

    \caption{Definition of the instrumentation operator $\maxinstrop{}$\figstop{}}
    \label{fig:max-instr-op}
    \end{minipage}
\end{figure}

Although simple in principle, the program in \autoref{fig:battery_example} is non-trivial to handle for automatic approaches. Software model checkers, in particular, tend to infer loop invariants for the program that explicitly refer to all array elements, resulting in high computational complexity. Hardness is increased further when considering larger arrays, or even unbounded arrays, where more compact representations of the loop invariant are needed. In the industry, programs calculating some aggregate measure over an array are ubiquitous, wherefore there is a clear need for verification tools to support aggregation operators like \lstinline$\max$.

To verify the program, we define the instrumentation operator $\maxinstrop{}$, shown in \autoref{fig:max-instr-op}. For sake of presentation, all instrumentation operators in this article are defined using functional arrays, using \lstinline$select$ and \lstinline$store$ to access and update the fields of an array, respectively. It is clear that the program in \autoref{fig:battery_example} can be rewritten to this style, but we leave this transformation implicit.
Like all instrumentation operators, $\maxinstrop{}$ is comprised of three parts, the \emph{ghost variables}, the \emph{rewrite rules}, and the \emph{instrumentation invariant}. While instrumentation operators for array properties are discussed at length in \autoref{sec:instrumentation_operators_for_arrays}, we summarise here the key features of $\maxinstrop{}$ in the context of \autoref{fig:battery_example}.

The essential function of $\maxinstrop{}$ is to induce, by applying the rewrite rules, a program transformation that mitigates the verification tasks. For this, assignments in a program that include the functions \lstinline$select$, \lstinline$store$, or \lstinline$\max$ can rewritten to more complex pieces of code that, in addition to accessing the array, also update the ghost variables of the instrumentation operator. The goal is to rewrite the program in such a way that its correctness can be checked without having to evaluate the problematic function~\lstinline$\max$; to this end, the instrumentation operator introduces ghost variables that track the maximum value of the array in an interval~$[\mathtt{ag\_lo}, \mathtt{ag\_hi})$.
The rewrite rules have to be applied to some selection of the read/write statements, so that the ghost variables are updated appropriately.
%
%
%
For example, if the program reads from position~\lstinline{ag_hi} in the array, then the tracked interval~$[\mathtt{ag\_lo}, \mathtt{ag\_hi})$ can be extended by incrementing the ghost variable~\lstinline{ag_hi} for the upper interval bound, and similarly for the lower bound for accesses at position $\mathtt{ag\_lo}-1$. 
Moreover, if the value read is greater than the previously tracked maximum value~\lstinline$ag_max$, this ghost variable is updated as well.
%

Instrumentation operators do not, for a given program, produce a unique instrumentation, but instead give rise to a set of instrumented programs, which we call the \emph{instrumentation space}. Essentially, the instrumentation space is formed by considering all possible sets of statements to which the rewrite rules can be applied. 
In this example, we need to rewrite the array reads at lines~10 and~13. Thereafter, we also rewrite the \lstinline$\max$ aggregation in terms of the ghost variables. As manifested in our experiments (\autoref{sec:evaluation}), this shifting of verification domain, from array elements to ghost variables, significantly reduces the complexity for automatic verification tools to infer loop invariants, thus allowing programs such as this one to be verified. 

The soundness of the instrumentation operator is tied tightly to the instrumentation invariant. Instrumentation invariants are formulas that can (only) refer to the ghost variables introduced by an instrumentation operator, and are formulated in such a way that they hold \emph{in every reachable state of every program in the instrumentation space.} To maintain their invariants, instrumentation operators use shadow variables that duplicate the values of program variables. For $\maxinstrop{}$, the invariant is that, if the ghost variables track some non-empty interval, then the tracked maximum value is indeed the maximum value of the shadow ghost array~\lstinline$ag_ar$ in that interval. The connection between the shadow variable and the underlying program variable is established by letting the rewrite rules assert their equality.


\paragraph{Non-linear Arithmetic}
Our main focus are instrumentation operators for verification of properties over arrays. However, instrumentation operators also generalise to other verification tasks, as illustrated here for the purpose of verifying non-linear arithmetic properties. 






\begin{figure}[tb]
\centering
\begin{minipage}{0.87\linewidth}
\centering
  \begin{minipage}{0.44\linewidth}
\begin{lstlisting}[style=ExtWhileStyle]
// Triangular numbers
i = 0; /*A*/ s = 0; /*B*/
assume(N>0);
while(i < N) {



    i = i + 1; /*C*/

    
    s = s + i;
}


NN = N*N; /*D*/

assert(s == (NN+N)/2);
\end{lstlisting}
  \end{minipage}
  \hfill
  \begin{minipage}{0.50\linewidth}
\begin{lstlisting}[style=ExtWhileStyle]
// Instrumented program
i=0; s=0; x_sq=0; x_shad=0;
assume(N>0);
while(i < N) {
    // Begin-instrumentation
    assert(i == x_shad);
    x_sq   = x_sq + 2*i + 1;
    i      = i + 1;
    x_shad = i;
    // End-instrumentation
    s      = s + i;
}
// Begin-instrumentation
assert(N == x_shad);
NN = x_sq;
// End-instrumentation
assert(s == (NN+N)/2);
\end{lstlisting}
  \end{minipage}
\caption{Program computing triangular numbers, and its instrumented counterpart\figstop{}}
\label{fig:ex-non-linear-arithmetic-code}
\end{minipage}
\end{figure}




Consider the program on the left-hand side of \autoref{fig:ex-non-linear-arithmetic-code}, computing the \emph{triangular numbers}~$s_N = (N^2 + N) / 2$.
For reasons of presentation, the program has been normalised by representing the square~\lstinline!N*N! using an auxiliary variable~\lstinline!NN!. While mathematically simple, verifying the post-condition~\lstinline!s == (NN+N)/2! in the program turns out to be challenging even for state-of-the-art model checkers, since such tools are usually thrown off course by the non-linear term~\lstinline!N*N!. Computing the value of \lstinline!NN! by adding a loop in line~16 is not sufficient for most tools either, since the program in any case requires a non-linear invariant \verb!0 <= i <= N && 2*s == i*i + i! to be derived for the loop in lines~4--12.

The insight needed to elegantly verify this program is that the value~\lstinline!i*i! can be tracked during the program execution using a ghost variable~\lstinline$x_sq$. An instrumentation operator for this purpose is defined in \autoref{fig:squareInstrumentation}, instrumenting the program to maintain the relationship~\lstinline$x_sq == i*i$. The instrumented program is shown in the right-hand side of \autoref{fig:ex-non-linear-arithmetic-code}.  Initially, \lstinline$i == x_sq == 0$, and each time the value of \lstinline$i$ is modified, also the variable~\lstinline$x_sq$ is updated accordingly. In particular, we rewrite the assignments~C, D of the left-hand side program using rewrite rules~(R2) and (R4), respectively, resulting in the instrumented and correct program on the right-hand side. With the value~\lstinline$x_sq == i*i$ available, both the loop invariant and the post-condition turn into formulas over linear arithmetic, and automatic program verification becomes largely straightforward.


%

\begin{figure}[tb]
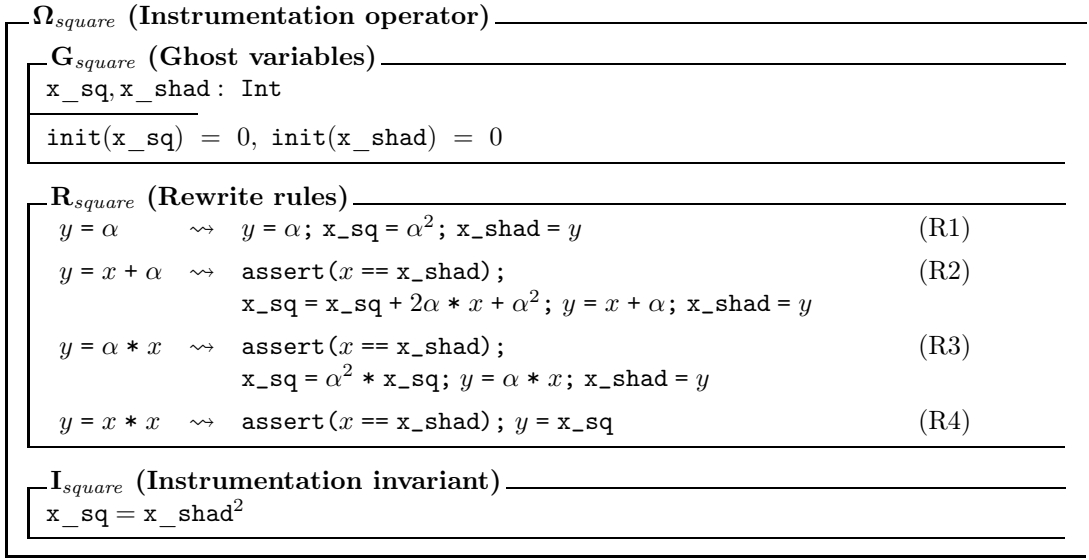

\centering
\begin{minipage}{0.87\textwidth}
    \centering
\begin{class}{\bm{\Omega}_{\mathit{square}} \textbf{
(Instrumentation operator)}}
\begin{schema}{\mathbf{G_{\mathit{square}}} \textbf{ (Ghost variables)}}
    \mathtt{x\_sq}, \mathtt{x\_shad:~Int}
    \where
    \mathtt{init(x\_sq)} ~=~ 0,\ 
    \mathtt{init(x\_shad)} ~=~ 0
\end{schema}
\\
\begin{schema}{\mathbf{R_{\mathit{square}}} \textbf{ (Rewrite~rules)}}
\begin{array}{l@{\quad}l@{\quad}l@{\qquad\qquad}l}
     y \assign \alpha  &\leadsto& y \assign \alpha\semic \verb!x_sq! \assign \alpha^2\semic \verb!x_shad! \assign y & \text{(R1)} \\[1ex]
     y \assign x \add \alpha  &\leadsto& \texttt{assert(}x \eqeq \verb!x_shad!\texttt{)}\semic  & \text{(R2)}\\
     && \verb!x_sq! \assign \verb!x_sq! \add 2\alpha \mult x \add \alpha^2\semic y \assign x \add \alpha\semic \verb!x_shad! \assign y\\[1ex]
     y \assign \alpha \mult x  &\leadsto& \texttt{assert(}x \eqeq \verb!x_shad!\texttt{)}\semic  & \text{(R3)} \\
      && \verb!x_sq! \assign \alpha^2 \mult \verb!x_sq!\semic y \assign \alpha \mult x\semic \verb!x_shad! \assign y\\[1ex]
     y \assign x\mult x  &\leadsto& \texttt{assert(}x \eqeq \verb!x_shad!\texttt{)}\semic y \assign \verb!x_sq! & \text{(R4)}
\end{array}
\end{schema}
\\
\begin{schema}{\mathbf{I_{\mathit{square}}}\textbf{ (Instrumentation invariant)}}
     \mathtt{x\_sq} = \mathtt{x\_shad}^2
\end{schema}
\end{class}
\caption{Instrumentation operator $\Omega_{square}$ for tracking squares\figstop{}}
    \label{fig:squareInstrumentation}
\end{minipage}
\end{figure}

\subsection{Automatic Program Instrumentation}
While the above examples illustrate the applicability of our instrumentation idea, we are still left with a major challenge, namely how to automatically discover the appropriate program transformation, without having to select manually the subset of statements to which the rewrite rules should be applied.
%
%
To this end, we split the process of program instrumentation into two parts:
\begin{enumerate}[(i)]
    \item Choosing an \emph{instrumentation operator,} which is defined manually, designed to be program-independent, and inducing a space of possible program transformations.
    \item Carrying out an automatic \emph{application strategy} to find, among the possible program transformations, one that enables verification of the program at hand.
\end{enumerate}


The first step is illustrated in the examples above. The second step relies on that instrumentation operators are designed to be \emph{sound,} which means that rewriting a wrong selection of program statements might lead to an instrumented program that cannot be verified, i.e., in which assertions might fail, but instrumentation can never turn an incorrect program into a correct instrumented program. This opens up the possibility to systematically search for the right program instrumentation. We propose a counterexample-guided algorithm for this purpose, which starts from some arbitrarily chosen instrumentation, checks whether the instrumented program can be verified, and otherwise attempts to fix the instrumentation using a refinement loop. As soon as a verifiable instrumented program has been found, the search can stop and the correctness of the original program has been shown.


\subsection{Contributions and Outline}


The two operators shown so far are simple and do not apply to all programs, but they can easily be generalised. The framework presented in this article provides the foundation for developing a library of formally verified instrumentation operators. In the scope of this article, we focus on two specification constructs that have been identified as particularly challenging in the literature: existential and universal \emph{quantifiers} over arrays, and \emph{aggregation} (or \emph{extended quantifiers}), which includes computing the sum or maximum value of elements in an array.
%
%
%
Our experiments on benchmarks taken from the SV-COMP~\cite{sv-report-22} show that even relatively simple instrumentation operators can significantly extend the capabilities of a software model checker, and often make the automatic verification of otherwise hard specifications easy.

The contributions of the article are:
\begin{inparaenum}[(i)]
  \item a general \emph{framework for program instrumentation}, which defines a space of  program transformations that work by rewriting individual statements (\autoref{sec:program-transformation} and \autoref{sec:operator-composition});
  \item a library of \emph{concrete operators} defined within the framework, obtaining full coverage of the quantifiers (\autoref{sec:normal-quantifiers}) and extended quantifiers (\autoref{sec:instrumentation-models}) in ACSL~\cite{acsl};
  \item machine-checked proofs of the correctness of the instrumentation operators for quantifiers
  and 
  extended quantifiers; 
  \item correctness results showing \emph{soundness} and \emph{weak completeness} for all instrumentation operators, and \emph{completeness} for certain classes of programs (\autoref{sec:instrumentation-correctness});
  \item an application strategy \emph{search algorithm} in this space, for a given program (\autoref{sec:transformation-strategies});
  \item a new \emph{verification tool}, \monocera, that is tailored to the verification of programs with aggregation; and
  \item an \emph{evaluation} of our method and tool on a set of examples, including such from SV-COMP~\cite{sv-report-22} (\autoref{sec:evaluation}).
\end{inparaenum}
\section{Instrumentation Framework}
\label{sec:program-transformation}




We now formally introduce the instrumentation framework.
The instrumentation of a program requires two main ingredients:
\begin{enumerate}
\item
An \emph{instrumentation operator} that defines how to
rewrite program statements
with the purpose of eliminating language constructs that are
difficult to reason about automatically,
but leaves the choice of which occurrences of these statements to
rewrite to the second part (this section).
\item
An \emph{application strategy} for the instrumentation operator, which can be implemented using heuristics or systematic search, among others. The strategy is responsible for selecting the right (if any) program instrumentation from the many possible ones. 
\autoref{sec:transformation-strategies} is dedicated to the second part.
\end{enumerate}
Even though instrumentation operators are non-deterministic, we shall guarantee their \emph{soundness:} if the original program has a failing assertion, so will any instrumented program, regardless of the chosen application strategy; that is, instrumentation of an incorrect program will never yield a correct program.
This aspect is discussed in \autoref{sec:instrumentation-correctness}.

We shall also guarantee a
weak form of \emph{completeness}, to the effect that
if an assertion that has not been added to the program by the instrumentation fails in the instrumented program, then it will also fail in the original program. As a result, any counterexample (for such an assertion) produced when verifying the instrumented program can be transformed into a counterexample for the original program.

Finally, for certain classes of programs, in \autoref{subsec:instrumentation-completeness} we shall guarantee \emph{completeness} in the typical, stricter sense. That is, if a program is correct and adheres to certain conditions, then there is an instrumented program that is also correct, such that automatic verification has been made easier.


\begin{figure}[tb]
    \begin{align*}
        \nonTerm{Type} ~::=~~ & \texttt{Int} \mid \texttt{Bool} \mid \texttt{Array}~\nonTerm{Type}
        \\
        \nonTerm{Expr} ~::=~~ & \nonTerm{DecimalNumber} \mid \texttt{true} \mid \texttt{false} \mid \nonTerm{Variable}
        \\ \mid ~~~ &
        \nonTerm{Expr} \eqeq \nonTerm{Expr} \mid
        \nonTerm{Expr} \;\texttt{<=}\; \nonTerm{Expr} \mid
        \texttt{!}\nonTerm{Expr} \mid \nonTerm{Expr} \;\texttt{\&\&}\; \nonTerm{Expr} 
        \\ \mid ~~~ &
        \nonTerm{Expr} \;\texttt{||}\; \nonTerm{Expr} \mid
        \nonTerm{Expr} \add \nonTerm{Expr} \mid \nonTerm{Expr} \mult \nonTerm{Expr} \mid \nonTerm{Expr} / \nonTerm{Expr}
        \\ \mid ~~~ &
       \texttt{const(}\nonTerm{Expr}\texttt{)} \!\mid\!
        \texttt{select(}\nonTerm{Expr}\texttt{,} \nonTerm{Expr}\texttt{)}
        \!\mid\!
        \texttt{store(}\nonTerm{Expr}\texttt{,} \nonTerm{Expr}\texttt{,} \nonTerm{Expr}\texttt{)} 
        \\
        \nonTerm{Prog} ~::=~~ & 
        \texttt{skip} \mid
        \nonTerm{Variable} \assign \nonTerm{Expr}
        \mid
        \nonTerm{Prog}\semic \nonTerm{Prog} \mid
        \texttt{while}~\texttt{(}\nonTerm{Expr}\texttt{)}~\nonTerm{Prog}
        \\ \mid ~~~ &
        \texttt{assert(}\nonTerm{Expr}\texttt{)} \mid
        \texttt{assume(}\nonTerm{Expr}\texttt{)} \mid
        \texttt{if}~\texttt{(}\nonTerm{Expr}\texttt{)}~\nonTerm{Prog}~\texttt{else}~\nonTerm{Prog}
    \end{align*}
    \caption{Syntax of the core language\figstop{}}
    \label{tab:programs}
\end{figure}


\subsection{The Core Language}
\label{subsec:programming-language}

A simple programming language, containing loops and arrays, is used to present the instrumentation framework.

\paragraph{Syntax}

While
%
our implementation works on programs represented as constrained Horn clauses~\cite{DBLP:conf/birthday/BjornerGMR15}, i.e., is language-agnostic,
for readability purposes we present our approach in the setting of an imperative core programming language with data-types for unbounded integers, Booleans, and arrays, and \texttt{assert} and \texttt{assume} statements. The language is deliberately kept simple, but is still close to
standard C. The main exception is the semantics of arrays: they are defined here to be 
\emph{functional} and therefore represent a value type. A notable deviation from C-like, imperative-style arrays is that the equality of two arrays depends on the array contents, and not on a pointer value. 
Arrays have integers as index type and are unbounded, and their signature and semantics are otherwise borrowed from the SMT-LIB theory of extensional arrays~\cite{smtlib}:
\begin{itemize}
\item
\makebox[0.68\linewidth][l]{%
 Constructing a \emph{constant} array filled with \texttt{x}:} \texttt{const(x)};
\item
\makebox[0.68\linewidth][l]{%
  \emph{Reading} the value of an array~\texttt{a} at index \texttt{i}:} \texttt{select(a, i)};
\item
\makebox[0.68\linewidth][l]{%
\emph{Updating} an array~\texttt{a} at index~\texttt{i} with 
a new value~\texttt{x}:} \texttt{store(a, i, x)};
\item
\makebox[0.68\linewidth][l]{%
\emph{Comparing} array~\texttt{a} and~\texttt{b} for equality:} \texttt{a == b}.
\end{itemize}

\begin{figure}[tb]
    \begin{gather*}
      \infer{\typeJudg{\nonTerm{DecimalNumber}}{\texttt{Int}}}{}
      \qquad
      \infer{\typeJudg{\texttt{true}}{\texttt{Bool}}}{}
      \qquad
      \infer{\typeJudg{\texttt{false}}{\texttt{Bool}}}{}
      \qquad
      \infer{\vphantom{X}\typeJudg{x}{\sigma}}{x \in \mathcal{X}, \alpha(x) = \sigma}
      \\[1ex]
      \infer{\typeJudg{s \eqeq t}{\texttt{Bool}}}{\typeJudg{s}{\sigma} & \typeJudg{t}{\sigma}}
      \qquad
      \infer{\typeJudg{s  \;\texttt{<=}\; t}{\texttt{Bool}}}{\typeJudg{s}{\texttt{Int}} & \typeJudg{t}{\texttt{Int}}}
      \qquad
      \infer{\typeJudg{s  \;\texttt{+}\; t}{\texttt{Int}}}{\typeJudg{s}{\texttt{Int}} & \typeJudg{t}{\texttt{Int}}}
      \qquad
      \infer{\typeJudg{s  \;\texttt{*}\; t}{\texttt{Int}}}{\typeJudg{s}{\texttt{Int}} & \typeJudg{t}{\texttt{Int}}}
      \\[1ex]
      \infer{\typeJudg{s  \;\texttt{/}\; t}{\texttt{Int}}}{\typeJudg{s}{\texttt{Int}} & \typeJudg{t}{\texttt{Int}}}
      \qquad
     \infer{\typeJudg{ \texttt{!} c}{\texttt{Bool}}}{\typeJudg{c}{\texttt{Bool}}}
      \qquad
      \infer{\typeJudg{s \;\texttt{\&\&}\; t}{\texttt{Bool}}}{\typeJudg{s}{\texttt{Bool}} & \typeJudg{t}{\texttt{Bool}}}
      \qquad
      \infer{\typeJudg{s \;\texttt{||}\; t}{\texttt{Bool}}}{\typeJudg{s}{\texttt{Bool}} & \typeJudg{t}{\texttt{Bool}}}
      \\[1ex]
     \infer{\typeJudg{\texttt{const(}t\texttt{)}}{\texttt{Array}~\sigma}}{\typeJudg{t}{\sigma}}
     \qquad
      \infer{\typeJudg{\texttt{select(}a\texttt{,} t\texttt{)}}{\sigma}}{\typeJudg{a}{\texttt{Array}~\sigma} & \typeJudg{t}{\texttt{Int}}}
      \qquad
      \infer{\typeJudg{\texttt{store(}a\texttt{,} s\texttt{,} t\texttt{)}}{\texttt{Array}~\sigma}}{\typeJudg{a}{\texttt{Array}~\sigma} & \typeJudg{s}{\texttt{Int}} & \typeJudg{t}{\sigma}}
    \end{gather*}
    \caption{Typing rules of the core language\figstop{}}
    \label{tab:typing}
\end{figure}

The complete syntax of the core language is given in \autoref{tab:programs}.
Programs are written using a vocabulary~$\mathcal{X}$ of typed program variables;
the typing rules of the language are shown in \autoref{tab:typing}.
%
As syntactic sugar, we sometimes write \verb!a[i]! instead of \texttt{select(a, i)}, and 
\verb!a[i] = x! instead of \texttt{a = store(a, i, x)}. 
We also use \lstinline$a = _$ to mean that \lstinline!a! is assigned to some constant array, for which we do not care about the value of the elements. 

\paragraph{Semantics}

We assume the Flanagan-Saxe \emph{extended execution model} of programs with 
\texttt{assume} and 
\texttt{assert} statements (see, e.g., \cite{fla-sax-01-popl}),  in which
executing 
an \texttt{assert} 
statement with an argument that evaluates to \textsf{false} 
\emph{fails}, i.e., terminates abnormally. An \texttt{assume}
statement with an argument that evaluates to \textsf{false} has the same
semantics as a non-terminating loop.
We denote by $D_{\sigma}$ the domain of a program type~$\sigma$. The domain
of an array type~$\mathtt{Array}~\sigma$ is the set of
functions~$f : \mathbbm{Z} \to D_{\sigma}$.
Partial correctness properties of programs are expressed using \emph{Hoare triples} $\{\mathit{Pre}\} \;P\; \{\mathit{Post}\}$, which state that an execution of $P$, starting in a state satisfying $\mathit{Pre}$, never fails, and may only terminate in states that satisfy $\mathit{Post}$.
As usual, a program~$P$ is considered \emph{(partially) correct} if the Hoare triple~$\{\mathit{true}\} \;P\; \{\mathit{true}\}$ holds.


The evaluation of program expressions is modelled
using a function~$\llbracket \cdot \rrbracket_s$ that maps program
expressions~$t$ of type~$\sigma$ to their value~$\llbracket t \rrbracket_s \in D_{\sigma}$
in the state~$s$.
%
%


\subsection{Instrumentation Operators}
\label{sec:instrumentation_operators}
An instrumentation operator defines schemes to rewrite programs
while preserving the meaning of the existing program assertions. Without
loss of generality, we restrict program rewriting to assignment
statements. Instrumentation can introduce
\emph{ghost state} by adding arbitrary fresh variables to the program.
%
The main part of an instrumentation consists of \emph{rewrite rules},
which are schematic rules $r \;\texttt{=}\; t \leadsto s$, where
the meta-variable~$r$ ranges over program variables, $t$~is an expression
that can contain further meta-variables (but not the left-hand side variable~$r$),
and $s$ is a schematic program in which
the meta-variables from $r \;\texttt{=}\; t$ might occur. Any assignment that matches $r \;\texttt{=}\; t$
can be rewritten to~$s$.


\begin{definition}[Instrumentation Operator]
\label{def:instr-op}
An \emph{instrumentation operator}
is a tuple
%
\instropdef{}, where:
\begin{itemize}
    \item[($i$)]
    $G = \langle (\mathtt{x}_1, \mathit{init}_1), \ldots, (\mathtt{x}_k, \mathit{init}_k) \rangle$
    is a tuple of pairs of ghost variables and their initial values;
    \item[($ii$)]
    $R$
    is a set of rewrite rules
    $r \;\texttt{=}\; t \leadsto s$, where $s$ is a program operating on the ghost
    variables~$\mathtt{x}_1, \ldots, \mathtt{x}_k$ (and containing meta-variables from $r \;\texttt{=}\; t$);
    \item[($iii$)]
    $I$ is a formula over the ghost variables~$\mathtt{x}_1, \ldots, \mathtt{x}_k$, called the
    \emph{instrumentation invariant.}
    
    \end{itemize}
The rewrite rules~$R$ and the invariant~$I$ must adhere to the following constraints:
\begin{enumerate}
    \item The instrumentation invariant $I$ is satisfied by the initial ghost values, i.e., it holds in the state
     $\{\mathtt{x}_1 \mapsto \mathit{init}_1, \ldots, \mathtt{x}_k \mapsto \mathit{init}_k\}$.
    \item For all rewrites
        $r \assign t \leadsto s \in R$
        the following conditions hold:
        \begin{enumerate}
            \item
                $s$ terminates (normally or abnormally) for 
                pre-states satisfying $I$, assuming that all meta-variables
                are ordinary program variables.
            \item $s$ does not assign to variables other than
                $r$ or the ghost variables~$\mathtt{x}_1, \ldots, \mathtt{x}_k$.
            \item $s$ preserves the instrumentation invariant:
                $\{ I \}\ s'\ \{ I \}$, where $s'$ is $s$
                with every $\texttt{assert(}e\texttt{)}$ statement replaced by an $\texttt{assume(}e\texttt{)}$ statement.
            \item $s$ preserves the semantics of the assignment~$r \assign t$, in the sense that the Hoare triple
                $\{ I \} \;\texttt{z} \assign t\semic s' \; \{ \texttt{z} = r \}$,
                where $\texttt{z}$ is a fresh variable, holds.
        \end{enumerate}
\end{enumerate}
\end{definition}

The conditions imposed in the definition ensure
that all instrumentations are \emph{correct}, in the sense
that they are sound and weakly complete, as we show below. 
In particular, the instrumentation invariant guarantees that the rewrites of program
statements are \emph{semantics-preserving} w.r.t. the original
program, and thus, the execution of any \lstinline{assert}
statement of the original program has the same effect
before and after instrumentation.
%
The conditions can themselves be deductively verified to hold
for each concrete instrumentation operator, and that this check is \emph{independent} of the programs to be instrumented, so that
an instrumentation operator can be proven correct once and for all.

Importantly, an instrumentation operator~$\Omega$ does itself not define 
which occurrences of program statements are to be rewritten, 
but only how they are rewritten.
\begin{definition}[Rewriting a program statement]
  \label{def:matchingRule}
  A program statement~\lstinline{v = e} matches a rewrite rule 
  $r \assign t \leadsto s$ if there is a substitution~$\sigma$, replacing each meta-variable with a program expression, that maps $r$ to $\sigma(r) = \texttt{v}$ and
  $t$ to $\sigma(t) = \texttt{e}$.

  The result of rewriting \lstinline{v = e} using the rule~$r \assign t \leadsto s$ is defined as follows:
  \begin{enumerate}[(i)]
  \item If \lstinline$v$ does not occur in \lstinline$e$, the result of rewriting~\lstinline{v = e} is the program~$\sigma(s)$, i.e., the right-hand side~$s$ of the rule, with meta-variables substituted with the expressions occurring in the assignment~\lstinline{v = e}.
  \item If \lstinline$v$ occurs in \lstinline$e$, we first replace \lstinline{v = e} with the two statements \lstinline!v' = e; v = v'!, for some fresh variable~\lstinline!v'! that has the same type as \lstinline!v!, and then rewrite the new assignment~\lstinline!v' = e! using $r \assign t \leadsto s$.
  \end{enumerate}
\end{definition}

The procedure in second case~(ii) has the purpose of ensuring that the right-hand side~\lstinline!e! is stable in $s$; otherwise, carrying out the assignment could have the effect of changing the value of expressions in \lstinline!e!.

Given a program~$P$ and the operator~$\Omega$, an instrumented program~$P'$
is derived by carrying out the following two steps:
\begin{inparaenum}[(i)]
\item variables $\mathtt{x}_1, \ldots, \mathtt{x}_k$ and the 
 assignments $\mathtt{x}_1 \assign \mathit{init}_1\semic \ldots\semic \mathtt{x}_k \assign \mathit{init}_k$
  are added at the beginning of the program, and
\item some of the assignments in~$P$ that match rewrite rules~$r \assign t \leadsto s$ in $\Omega$ are rewritten as defined in \autoref{def:matchingRule}.
\end{inparaenum}
We denote by~$\Omega(P)$ the set of all instrumented programs~$P'$ that can be 
derived in this way.
%

Two simplified examples of instrumentation operators were already given in \autoref{sec:introduction}. The reader is advised to return to those examples, and convince herself that the operators indeed satisfy the conditions required in \autoref{def:instr-op}.

\section{Instrumentation Operators for Arrays}
\label{sec:instrumentation_operators_for_arrays}

We will now instantiate our framework for several classes of functions over arrays, targeting, in particular, the different kinds of quantifiers over arrays provided by ACSL~\cite{acsl}. Among others, we will obtain a more general and more precise version of the operator for the extended quantifier~\lstinline$\max$ from \autoref{sec:introduction}.


\subsection{Instrumentation Operators for Quantification over Arrays}
\label{sec:normal-quantifiers}

To handle quantifiers in a programming setting, we extend the language defined in \autoref{tab:programs} by adding quantifier expressions over arrays, as shown in \autoref{tab:programs_quantifiers}, where $Q$ ranges over $\{\forall, \exists\}$.
We shall often use $\texttt{\textbackslash forall}$ for $\texttt{quant}_\forall$, and $\texttt{\textbackslash exists}$ for $\texttt{quant}_\exists$. 
We also extend the language with a lambda expression over two variables. The rationale is that quantified properties can often be expressed as a binary predicate, with the first argument corresponding to the value of an element and the second to the index. 

\begin{figure}[tb]
    \begin{align*}
        \nonTerm{Expr} ~::=~~ &\texttt{(}\lambda \texttt{(}\nonTerm{Variable}\texttt{,}\nonTerm{Variable}\texttt{)}\texttt{.}\nonTerm{Expr}\texttt{)}~\texttt{(}\nonTerm{Expr},~\nonTerm{Expr}\texttt{)} \mid
        \\
        &\texttt{quant}_{Q}\texttt{(}\nonTerm{Expr}\texttt{,} \nonTerm{Expr}\texttt{,} \nonTerm{Expr}\texttt{,} \lambda\texttt{(}\nonTerm{Variable},\nonTerm{Variable}\texttt{)}.\nonTerm{Expr}\texttt{)} \mid
    \end{align*}
    \caption{Extension of the core language with quantified expressions\figstop{}}
    \label{tab:programs_quantifiers}
\end{figure}




Using $\texttt{P(x$_0$,i$_0$)}$ as shorthand for $\texttt{(}\mathtt{\lambda}\texttt{(x,i).P)(x$_0$,i$_0$)}$, the semantics of the new expressions can be defined formally as:
\begin{align*}
    \llbracket \texttt{quant}_Q\texttt{(a, l, u, }\lambda \texttt{(x,i).P}\texttt{)} \rrbracket_s
    &~=~
    Q \mathtt{i \in [l, u).}\; \llbracket \texttt{P(a[i],i)} \rrbracket_s
\end{align*}

\begin{figure}[tb]
    \centering
    \begin{minipage}{0.87\textwidth}
    \centering
\begin{lstlisting}[escapechar=@,style=ExtWhileStyle]
Int N = nondet;
assume(N > 0);
Array Int a = const(0);
Int i = 0;
while(i < N) {
    a = store(a, i, i);
    i = i + 1;
}
Bool b = \forall(a, 0, N, @$\lambda$@(i,x).(x == i));
assert(b);
\end{lstlisting}
    \caption{Example of program to be verified using a quantified assert statement\figstop{}}
    \label{fig:quantified_example}
    \end{minipage}
\end{figure}


\begin{figure}[!htb]
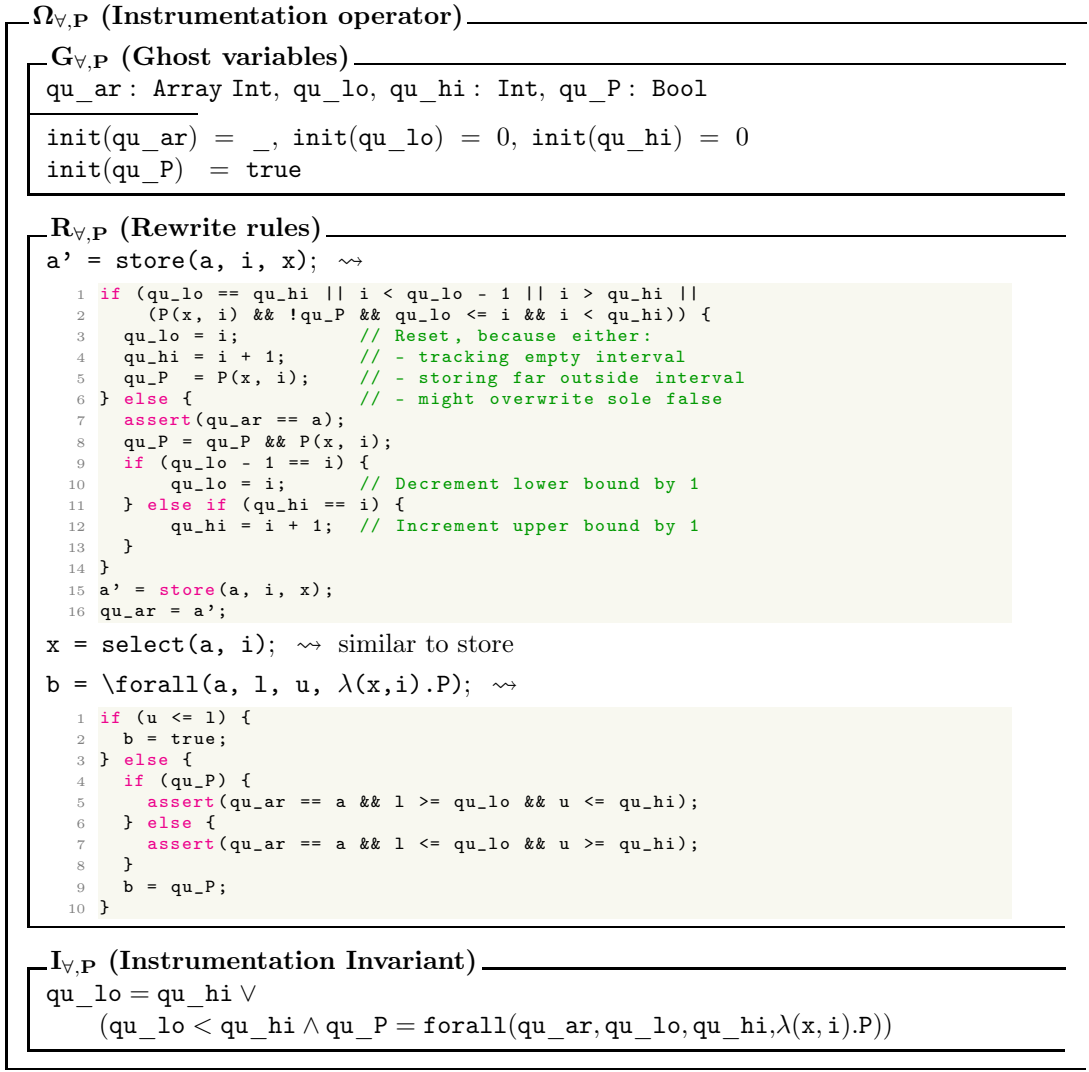

    \centering
    \begin{minipage}{0.87\textwidth}
    \centering
\begin{class}{\bm{\Omega}_{\mathbf{\forall,P}} \textbf{
(Instrumentation operator)}}
\begin{schema}{\mathbf{G_{\forall,P}} \textbf{ (Ghost variables)}}
    \mathtt{qu\_ar:~Array~Int},\
    \mathtt{qu\_lo,~qu\_hi:~Int},\ 
    \mathtt{qu\_P:~Bool}
    \where
    \mathtt{init(qu\_ar)} ~=~ \_,\ 
    \mathtt{init(qu\_lo)} ~=~ 0,\ 
    \mathtt{init(qu\_hi)} ~=~ 0 \\
    \mathtt{init(qu\_P)} ~~=~ \mathtt{true}
\end{schema}\\
\begin{schema}{\mathbf{R_{\forall,P}} \textbf{ (Rewrite~rules)}}
     \texttt{a' = store(a, i, x)}; ~\leadsto~  \\
     \qquad \mbox{\lstinputlisting[linewidth=\opcodesnippetwidth, style=ExtWhileStyleOp]{examples/forall_store_rewrite.c}}
     \also
     \texttt{x = select(a, i)}; ~\leadsto~  \text{similar to store} 
     \also
     \texttt{b = \textbackslash forall(a, l, u, } \lambda \texttt{(x,i).P)}; ~\leadsto~ \\
     \qquad \mbox{\lstinputlisting[linewidth=\opcodesnippetwidth, style=ExtWhileStyleOp]{examples/forall_assert_rewrite.c}}
\end{schema}\\
\begin{schema}{\mathbf{I_{\forall,P}}\textbf{ (Instrumentation Invariant)}}
     \mathtt{qu\_lo} = \mathtt{qu\_hi} \: \lor \\ \qquad
     (\mathtt{qu\_lo} < \mathtt{qu\_hi}\land 
     \mathtt{qu\_P} = 
     \mathtt{forall(qu\_ar, qu\_lo, qu\_hi,} \lambda\mathtt{(x, i).P))}
\end{schema}
\end{class}
\caption{Instrumentation operator for universal quantification\figstop{}}
    \label{fig:instr-op-forall}
    \end{minipage}
\end{figure}

As an example, consider the program in \autoref{fig:quantified_example}, which uses $\texttt{forall}$ to specify that every element in the array should equal its index. To verify this program, we
define the instrumentation operator $\forallinstrop{}$ in \autoref{fig:instr-op-forall}. The operator is similar to $\maxinstrop{}$, which was introduced in \autoref{sec:motivating_examples}, but, instead of tracking the maximum value, we use the ghost variable \lstinline$qu_P$ to track if the given property $P$ holds for all elements in the tracked interval [\lstinline$qu_lo$, \lstinline$qu_hi$). Naturally, an instrumentation operator for existential quantification can be defined in a similar fashion. 
When applying instrumentation operators for array properties, such as $\forallinstrop{}$,  we assume for simplicity a \emph{normal form} of programs, into which every program can be rewritten by introducing additional variables. In the normal form, \lstinline{store}, \lstinline{select} and \lstinline{forall} can only occur in simple assignment statements. For example, \lstinline!store! is restricted to occur in statements of the form: \lstinline{a' = store(a, i, x)}.
%

The rewrite rules can be justified as follows.
For \lstinline!store!, the first if-branch resets the tracking to the singleton interval $[\mathtt{i}, \mathtt{i+1})$ when accessing elements far outside of the currently tracked interval, or if we are tracking the empty interval (as is the case at initialisation).
If an access occurs immediately adjacent
to the currently tracked interval (e.g., if $\mathtt{i} = \mathtt{qu\_lo-1}$), then that element is added to the tracked interval, and the value of $\mathtt{qu\_P}$ is updated to also account for the value of $\mathtt{P}$ at index $\mathtt{i}$.
If instead the access is within the tracked interval, then we either reset the interval (if $\mathtt{qu\_P}$ is $\mathtt{false}$) or keep the interval unchanged (if $\mathtt{qu\_P}$ is $\mathtt{true}$).
Rewrites of \lstinline!select! are similar to \lstinline!store!, except tracking does not
need to be reset when reading inside the tracked interval.
For rewrites of quantified expressions, if the quantified interval is empty, $\mathtt{b}$ is assigned $\mathtt{true}$.
Otherwise, assertions check that the tracked interval matches the quantified interval before assigning $\mathtt{t}$ to $\mathtt{qu\_P}$.
If $\mathtt{qu\_P}$ is $\mathtt{true}$, then it is sufficient that quantification occurs over a sub-interval of the tracked interval, and vice versa if $\mathtt{qu\_P}$ is $\mathtt{false}$.

\begin{figure}[!tb]
    \centering
    \begin{minipage}{0.87\textwidth}
\begin{lstlisting}[escapechar=@,style=ExtWhileStyle]
Int qu_lo = 0; qu_hi = 0; Int qu_ar = []; Bool qu_P = true;
Int N = nondet; assume(N > 0); Array Int a = const(0); Int i = 0;
while(i < N) {
    // Begin-instrumentation
    if (qu_lo == qu_hi || i < qu_lo - 1 || i > qu_hi || 
        (i == i && !qu_P && qu_lo <= i && i < qu_hi)) {
      qu_lo = i; qu_hi = i + 1; qu_P = (i == i);
    } else {
      assert(qu_ar == a); qu_P = qu_P && i == i;
      if (qu_lo - 1 == i) {
          qu_lo = i;
      } else if (qu_hi == i) {
          qu_hi = i + 1;
      }
    }
    a' = store(a, i, i);
    qu_ar = a';
    // End-instrumentation
    a = a'; i = i + 1;
}
Bool b;
// Begin-instrumentation
if (N <= 0) {
  b = true;
} else {
  if (qu_P) {
    assert(qu_ar == a && 0 >= qu_lo && N <= qu_hi);
  } else {
    assert(qu_ar == a && 0 <= qu_lo && N >= qu_hi);
  }
  b = qu_P;
}
// End-instrumentation
assert(b);
\end{lstlisting}
    \caption{Resulting program from applying the instrumentation $\Omega_{\forall, \lambda(i,x). x = i}$ to the program in \autoref{fig:quantified_example}\figstop{}}
    \label{fig:quantified_example_instr}
    \end{minipage}
\end{figure}

The result of applying $\Omega_{\forall, \lambda(i,x). x = i}$ to the program in \autoref{fig:quantified_example} is shown in \autoref{fig:quantified_example_instr}. In the process of
rewriting, we renamed the left-hand side of \lstinline!a = store(a, i, i)! by introducing a fresh variable~\lstinline!a'!, following case~(ii) of \autoref{def:matchingRule}.
As exhibited by the experiments in \autoref{sec:evaluation}, the resulting program is, in many cases, easier to handle for state-of-the-art verification tools.
%
Note that the proposed instrumentation operator is only one possibility among many. For example, one could track simultaneously several ranges over the array in question, or also track the index of some element in the array over which \texttt{P} holds, or make different choices on store operations outside of the tracked interval.

\subsection{Instrumentation Operators for Aggregations}\label{sec:instrumentation-models}
We now turn to defining instrumentation operators for any \emph{aggregation} that can be formalised using \emph{monoid homomorphisms}. As we shall show, this not only generalises $\forall$ and~$\exists$, but also allows defining instrumentation operators for the aggregations \verb!\sum!, \verb!\product!, \verb!\numof!, \verb!\max!, and \verb!\min!. 
Such aggregation is supported in the specification languages JML~\cite{DBLP:books/daglib/p/LeavensBR99} and ACSL~\cite{acsl} in the form of \emph{extended quantifiers}, and is frequently needed for the specification of functional correctness properties.  Although commonly available in specification languages, most verification tools do not support the verification of properties specified using aggregation. Such properties instead typically have to be manually rewritten using standard quantifiers, pure recursive functions, or ghost code involving loops. This reduction step is error-prone, and represents an additional complication for automatic verification approaches, but can be elegantly handled using our proposed instrumentation framework.

\begin{definition}[Monoid]
  A \emph{monoid} is a structure~$(M, \circ, e)$ consisting of a non-empty set~$M$, a binary associative operation~$\circ$ on $M$, and a neutral element~$e \in M$.
  A \emph{monoid} is called \emph{commutative} if $\circ$ is commutative. 
\end{definition}

We model finite intervals of arrays as strings over the monoid~$(D^*, \cdot, \epsilon)$ of finite sequences over some data domain~$D$. The concatenation operator~$\cdot$ is non-commutative. The aggregation or quantification of finite intervals of an array is then defined as a monoid homomorphism from $(D^*, \cdot, \epsilon)$ to some target monoid.

\begin{definition}[Monoid Homomorphism]
  A \emph{monoid homomorphism} is a function
  $h : M_1 \to M_2$ between monoids~$(M_1, \circ_1, e_1)$ and $(M_2, \circ_2, e_2)$ such that $h(x \circ_1 y) = h(x) \circ_2 h(y)$ for all $x, y \in M_1$ and $h(e_1) = e_2$.
\end{definition}

For example, universal quantification can be modelled as a homomorphism to the monoid $(\mathbb{B}, \wedge, \mathit{true})$. 
A second example is the computation of the \emph{maximum} (similarly, \emph{minimum}) value in a sequence. For the domain of integers, the natural monoid to use is the algebra~$(\mathbbm{Z}_{-\infty}, \max, -\infty)$ of integers extended with $-\infty$\footnote{For machine integers, $-\infty$ could be replaced with \texttt{INT\_MIN}.}. 
%
A third example is the computation of the element \emph{sum} of an integer sequence, modelled by the homomorphism~$h_{\operatorname{sum}}$ to the target monoid $(\mathbbm{Z}, +, 0)$. 
The monoid in the last example, the computation of element sums, has the special property of being \emph{cancellative.}

\begin{definition}[Cancellative Monoid]
  A monoid~$(M, \circ, e)$ is \emph{cancellative} if one of the following equivalent properties hold:
  \begin{enumerate}
      \item $x \circ y = x \circ z$ implies $y = z$, and $y \circ x = z \circ x$ implies $y = z$, for all $x, y, z \in M$.
      \item there are partial operations  $\circ^{-1}_l, \circ^{-1}_r : M \times M \rightharpoonup M$ such that $(x \circ y) \circ^{-1}_r y = x$ and $x \circ^{-1}_l (x \circ y) = y$, for all $x, y \in M$.
  \end{enumerate}
\end{definition}

In the case of a \emph{commutative} cancellative monoid, it is not necessary to distinguish two separate functions~$\circ^{-1}_l, \circ^{-1}_r$ for left- and right-cancellation. We use the notation $(x \circ y) \circ^{-1} y$ in the commutative case.

Cancellative monoids generalise the notion of a group, and are a useful special case in the context of instrumentation: aggregation operators based on cancellative monoids can be represented using a particularly efficient kind of instrumentation. An example of a cancellative (commutative) monoid is the group~$(\mathbbm{Z}, +, 0)$ used to model \lstinline!\sum!, which calculates the sum of the elements in an array. 
%

\subsubsection{Programming Language with Aggregation}
\label{sec:programming-language}

%
We extend our core
programming language a second time, this time adding support for aggregation of array values. We introduce the expressions
$\texttt{aggregate}_{M, h}\texttt{(}\nonTerm{Expr}\texttt{,} \nonTerm{Expr}\texttt{,} \nonTerm{Expr}\texttt{)}$,
and use monoid homomorphisms
to formalise them.
Recall that we use $D_{\sigma}$ to denote the domain of a program type~$\sigma$.
%
%
%

\begin{definition}\label{def:aggregation}
  Let~$\texttt{Array}~\sigma$ be an array type, $\sigma_M$ a program type,
  $M$~a commutative monoid that is a subset of $D_{\sigma_M}$, and $h : D_{\sigma}^* \to M$ a monoid homomorphism. 
  Let furthermore $\mathit{ar}$ be an expression of type $\texttt{Array}~\sigma$, and $l$ and $u$ be integer expressions. 
  Then, $\texttt{aggregate}_{M, h}\texttt{(}\mathit{ar}\texttt{,} l\texttt{,} u\texttt{)}$ is an expression of type~$\sigma_M$,
defined as:
\begin{equation}\label{eq:definition-aggregate}
    \llbracket \texttt{aggregate}_{M, h}\texttt{(}\mathit{ar}\texttt{,} l\texttt{,} u\texttt{)} \rrbracket_s
    ~=~
    h(\langle \llbracket \mathit{ar} \rrbracket_s (\llbracket l \rrbracket_s), \llbracket \mathit{ar} \rrbracket_s (\llbracket l \rrbracket_s + 1), \ldots, \llbracket \mathit{ar} \rrbracket_s (\llbracket u \rrbracket_s - 1) \rangle)
\end{equation}
\end{definition}

\begin{table}
    \centering
    \begin{tabular}{l@{\qquad}c@{\qquad}c}
      \toprule
      (Extended) Quantifier & Target Monoid & $h$ generated by \\
      \midrule
       \lstinline$\sum(ar,l,u)$ & $(\mathbb{Z},+,0)$ &  $\langle x \rangle \mapsto x$
       \\
       \lstinline$\min(ar,l,u)$ & $(\mathbb{Z}_{+\infty},\mathit{min},+\infty)$ &  $\langle x \rangle \mapsto x$
       \\
       \lstinline$\max(ar,l,u)$ & $(\mathbb{Z}_{-\infty},\mathit{max},-\infty)$ &  $\langle x \rangle \mapsto x$
       \\
       \lstinline$\product(ar,l,u)$ & $(\mathbb{Z},\cdot,1)$ &  $\langle x \rangle \mapsto x$
       \\
       \lstinline[mathescape]!\numof(ar,l,u,$\lambda$(x,i).P)! & $(\mathbb{Z},+,0)$ & $\langle b \rangle \mapsto b \;?\; 1 : 0$
       \\\midrule
       \lstinline[mathescape]!\forall(ar,l,u,$\lambda$(x,i).P)! & $(\mathbb{B},\wedge,\mathit{true})$ & $\langle (x, i)\rangle \mapsto P(x, i)$
       \\
       \lstinline[mathescape]!\exist(ar,l,u,$\lambda$(x,i).P)! & $(\mathbb{B},\vee,\mathit{false})$ & $\langle (x, i)\rangle \mapsto P(x, i)$
       \\\bottomrule
    \end{tabular}

    \smallskip
    \caption{Target monoid and homomorphism generator for all ACSL (extended) quantifiers\figstop{}}
    \label{tab:acsl_exquans_defs}
\end{table}

%
Intuitively, the expression~$\texttt{aggregate}_{M, h}\texttt{(}\mathit{ar}\texttt{,} l\texttt{,} u\texttt{)}$ denotes the result of applying the homomorphism~$h$ to the slice~$\mathit{ar}[l \;..\; u-1]$ of the array~$\mathit{ar}$.
As a convention, in case $u < l$ we assume that the result of \texttt{aggregate} is $h(\langle\rangle)$.
As with array accesses, we assume also that \lstinline{aggregate} only occurs
in normalised statements of the form $\texttt{t = aggregate}_{M, h}\texttt{(}\mathit{ar}\texttt{,} l\texttt{,} u\texttt{)}$. 

We define aggregation operators corresponding to all extended quantifiers found in ACSL, with the homomorphism $h$ being uniquely defined by how it acts on singleton sequences. For example, for \lstinline!sum!, the homomorphism $h_{\mathit{sum}}$ is generated by the mapping $\langle n\rangle \mapsto n$. \autoref{tab:acsl_exquans_defs} shows, for each operator, the homomorphism and its domain monoid.    

\medskip
One may observe here that \lstinline!aggregate!, in a slightly extended form, subsumes the operator~\lstinline!quant!$_Q$ previously introduced for ordinary quantifiers. For this, consider a generalised \lstinline!aggregate'! operator that passes not only array elements, but also their indexes to a homomorphism:
\begin{equation*}
   \llbracket \texttt{aggregate'}_{M, h}\texttt{(}\mathit{ar}\texttt{,} l\texttt{,} u\texttt{)} \rrbracket_s
    ~=~
    h(\langle (\llbracket \mathit{ar} \rrbracket_s (\llbracket l \rrbracket_s), \llbracket l \rrbracket_s), \ldots, (\llbracket \mathit{ar} \rrbracket_s (\llbracket u \rrbracket_s - 1), \llbracket u \rrbracket_s - 1) \rangle)
\end{equation*}
For the quantifier~\lstinline!\forall!, we can then let $M_\forall$ be the monoid $(\mathbb{B}, \wedge, \mathit{true})$ and $h_\forall$ the homomorphism generated by mapping value-index pairs~$\langle(x, i)\rangle$ to the Boolean $P(x, i)$.

\subsubsection{An Instrumentation Operator for Non-Cancellative Monoids}
\label{sec:non-cancellative}
\begin{figure}[!htb]
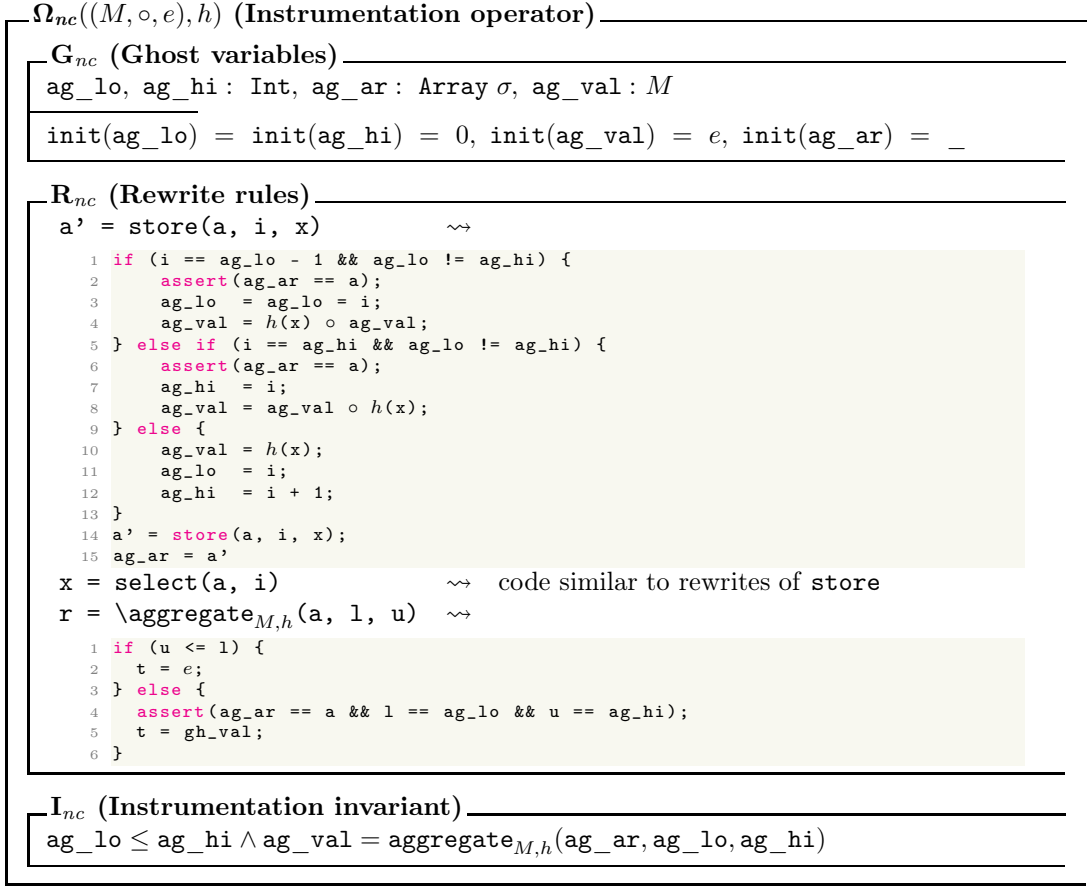

    \centering
\begin{minipage}{0.87\textwidth}
\begin{class}{\bm{\ncaninstropnoargs{}}((M,\circ,e), h) \textbf{
(Instrumentation operator)}}
\begin{schema}{\mathbf{G_{\mathit{nc}}} \textbf{ (Ghost variables)}}
    \mathtt{ag\_lo},~\mathtt{ag\_hi:~Int},
    \ \mathtt{ag\_ar:~Array\ }\sigma, 
    \ \mathtt{ag\_val:}~M
    \where
    \mathtt{init(ag\_lo)} ~=~
    \mathtt{init(ag\_hi)} ~=~ 0,\ 
    \mathtt{init(ag\_val)} ~=~ e,\ 
    \mathtt{init(ag\_ar)} ~=~ \_
\end{schema}
\\ 
\begin{schema}{\mathbf{R_{\mathit{nc}}} \textbf{ (Rewrite~rules)}}
\begin{array}{l@{\quad}l@{\quad}l@{\quad\qquad}l}
     \texttt{a' = store(a, i, x)}  &\leadsto&  & \\
     \multicolumn{4}{l}{\hspace{2em}\mbox{\lstinputlisting[linewidth=\opcodesnippetwidth, style=ExtWhileStyleOp]{examples/non_cancellative_rewrite_store.c}}}
    \\
     \texttt{x = select(a, i)}  &\leadsto&
        \text{code similar to rewrites of \texttt{store}} & 
    \\
     \texttt{r = \textbackslash }\texttt{aggregate}_{M,h}\texttt{(a, l, u)} &\leadsto&  & \\
     \multicolumn{4}{l}{\hspace{2em}\mbox{\lstinputlisting[linewidth=\opcodesnippetwidth, style=ExtWhileStyleOp]{examples/cancellative_rewrite_aggr.c}}}\\
\end{array}
\end{schema}
\\ 
\begin{schema}{\mathbf{I_{\mathit{nc}}}\textbf{ (Instrumentation invariant)}}
     \mathtt{ag\_lo \leq ag\_hi \land ag\_val} = \mathtt{aggregate}_{M,h}\mathtt{(ag\_ar, ag\_lo, ag\_hi)}
\end{schema}
\end{class}
\caption{Instrumentation operator for aggregation parameterised on a non-cancellative target monoid $(M, \circ, e)$ and homomorphism $h$\figstop{}}
\label{fig:noncancel-instr-op}
\end{minipage}
\end{figure}
We are now ready to define instrumentation operators that work for any aggregation that is based on a monoid homomorphism. We start by considering the general case of monoids that are not (necessarily) cancellative, and define an instrumentation operator that is \emph{parameterised} on the target monoid $(M, \circ, e)$ and homomorphism $h: D_\sigma^* \rightarrow M$, as shown in \autoref{fig:noncancel-instr-op}.
The operator can be instantiated to obtain instrumentation operators for any ACSL extended quantifier, with the respective target monoid and homomorphism being as defined in \autoref{tab:acsl_exquans_defs}. 
In such instances of the parameterised instrumentation operator, $(M, \circ, e)$ and $h$ have to be replaced with their respective programming language implementation.


The operator $\ncaninstropnoargs{}$ is largely similar to the ones already defined for \lstinline!\max! and \lstinline!\forall!, wherefore we refer the reader to \autoref{sec:normal-quantifiers} for an explanation of the intuition behind the operator. The main difference is that the statements on lines 4 and~8 now use $\circ$ and $h$.


\begin{theorem}[Correctness of $\ncaninstropnoargs{}$]\label{theorem:correctness-non-cancellative}
    For any target monoid $(M, \circ, e)$ and monoid homomorphism $h: D_\sigma^* \rightarrow M$ with implementations of $\circ$ and $h$ that terminate and do not assign to any variables, 
    the instrumentation operator $\ncaninstropgen{}$ is correct, i.e., it adheres to the constraints imposed by \autoref{def:instr-op}. 
\end{theorem}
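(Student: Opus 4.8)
The plan is to verify, one at a time, the four constraints that \autoref{def:instr-op} imposes on $R_{\mathit{nc}}$ and $I_{\mathit{nc}}$, reducing the entire argument to a single algebraic fact about $h$ followed by a routine case analysis over the branches of the three rewrite rules. Throughout, write $g(A,l,u)$ for the value $\llbracket\texttt{aggregate}_{M,h}\texttt{(}A\texttt{,}l\texttt{,}u\texttt{)}\rrbracket$.

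First I would isolate the \emph{splitting property} of aggregation that drives everything else: for any array~$A$ and indices $l \le m \le u$,
\[ g(A,l,u) \;=\; g(A,l,m)\circ g(A,m,u), \]
which is immediate from \autoref{def:aggregation} together with the homomorphism identity $h(s\cdot t)=h(s)\circ h(t)$, since the slice $A[l\,..\,u-1]$ is the concatenation of $A[l\,..\,m-1]$ and $A[m\,..\,u-1]$. Specialising to $m=u-1$ and to $m=l+1$ gives the two single-element extension identities
\[ g(A,l,u) = g(A,l,u-1)\circ h(\langle A(u-1)\rangle), \qquad g(A,l,u) = h(\langle A(l)\rangle)\circ g(A,l+1,u), \]
which are exactly what justify the two boundary updates $\mathtt{ag\_val} := \mathtt{ag\_val}\circ h(\ldots)$ and $\mathtt{ag\_val} := h(\ldots)\circ\mathtt{ag\_val}$ on lines~4 and~8 of the rewrite code.

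With this in hand, constraint~1 is immediate: in the initial state $\mathtt{ag\_lo}=\mathtt{ag\_hi}=0$, so $\mathtt{ag\_lo}\le\mathtt{ag\_hi}$ holds, and by the empty-interval convention $g(\mathtt{ag\_ar},0,0)=h(\langle\rangle)=e=\mathtt{ag\_val}$, so $I_{\mathit{nc}}$ holds. Constraints~2(a) and~2(b) are structural: the right-hand sides of all three rules are loop-free, and by the hypothesis that the supplied implementations of $\circ$ and $h$ terminate and assign to no variables, each rule terminates (normally or abnormally) and assigns only to its rewritten target ($\mathtt{a'}$, $\mathtt{x}$, or $\mathtt{r}$) and to the ghost variables. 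The substantial work is constraints~2(c) (invariant preservation) and~2(d) (semantics preservation), discharged by a case split mirroring the analysis already sketched for $\maxinstrop$ and $\forallinstrop$: within the \texttt{store}/\texttt{select} rules one splits on the guard selecting \emph{reset}, \emph{extend-high}, \emph{extend-low}, or (for \texttt{select}) \emph{unchanged}. Assuming $I_{\mathit{nc}}$ and reading the assume-converted assertion $\mathtt{ag\_ar}=\mathtt{a}$, the reset branch sets the interval to the singleton $[\mathtt{i},\mathtt{i}+1)$ whose aggregate is $h(\langle x\rangle)$; the two extension branches use the single-element identities above, crucially noting that a \texttt{store} at an index lying just outside $[\mathtt{ag\_lo},\mathtt{ag\_hi})$ leaves the previously tracked slice untouched, so the old $\mathtt{ag\_val}$ is still the aggregate over the old interval; for \texttt{select} the array is unchanged, so the inside case preserves $I_{\mathit{nc}}$ trivially. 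The \texttt{aggregate} rule assigns only to $\mathtt{r}$, hence preserves $I_{\mathit{nc}}$ vacuously, and for its semantics~(2(d)) I would combine $I_{\mathit{nc}}$ with the assumed assertions identifying $\mathtt{ag\_ar},\mathtt{ag\_lo},\mathtt{ag\_hi}$ with the queried $\mathtt{a},\mathtt{l},\mathtt{u}$ (plus the separate empty-interval branch returning $e$) to get $\mathtt{r}=g(\mathtt{a},\mathtt{l},\mathtt{u})$; for 2(d) of \texttt{store}/\texttt{select} the original assignment runs unchanged and the ghost code touches no program variable, so the target receives exactly the value of $t$.

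The only genuine obstacle is pinning down the guards of the case split so that they are jointly exhaustive and each implication actually goes through. In particular, one must confirm that in the non-cancellative setting the tracked interval can only ever be extended at an endpoint or reset wholesale, and never shrunk or patched internally: without an inverse there is no way to repair the aggregate after an interior overwrite, which is precisely what forces the reset branch whenever a \texttt{store} falls strictly inside $[\mathtt{ag\_lo},\mathtt{ag\_hi})$. Once the guards are fixed, every case is a direct application of the splitting property, and the remaining reasoning is bookkeeping.
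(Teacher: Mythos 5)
Your proposal is correct and follows essentially the same route as the paper's proof: establish the invariant initially from $h(\langle\rangle)=e$, discharge termination and the write-set condition structurally from the hypotheses on $\circ$ and $h$, and then prove invariant and semantics preservation by a case split on the guards, with the boundary-extension cases justified by the homomorphism identity (your ``splitting property'' specialised to one element is precisely the derivation chain the paper writes out for the $\mathtt{i}=\mathtt{ag\_lo}-1$ case, with the $\mathtt{i}=\mathtt{ag\_hi}$ case symmetric and the remaining case a reset). Stating the splitting lemma up front is a slightly cleaner packaging, but the substance is identical.
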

\begin{proof}
Let $(M, \circ, e)$ and $h$ be such a target monoid and homomorphism.
We will show that the instrumentation operator created by implementing $\ncaninstropgen{}$ with terminating implementations of $\circ$ and $h$  that do not assign to any variables is correct.

First, observe that \lstinline{qu_lo} and \lstinline{qu_hi}
are initialised to the same value. Since the value of aggregation over an empty interval is the monoid identity~$e$, and \verb|ag_val| is initialised to $e$, the invariant $I$ is established by the initial values assigned by $G$.
It now remains to show that the rewrite rules adhere
to the remaining constraints. 

We will first show that a rewritten \verb|store| terminates and satisfies the constraints in \autoref{def:instr-op}. It is easy to see that the instrumentation terminates, since no loop or recursion is introduced and the implementations of $\circ$ and $h$ are assumed to terminate. It is also easy to see that we only assign to ghost variables and \verb|a'|, and not to any other program variables, since $\circ$ and $h$ do not assign to any variables. Thus, the semantics is preserved, since the original \verb|store| statement remains unchanged on line 14 and the instrumentation code only assigns to ghost variables. Lastly, we must establish that the invariant is preserved, i.e., that the new values of \verb|ag_val|, \verb|ag_hi|, \verb|ag_lo|, and \verb|ag_ar| satisfy the invariant. For this, we distinguish three cases, each of which corresponds to a conditional block. In the first case, we assume that $\mathtt{i = ag\_lo - 1}$ and $\mathtt{ag\_lo \neq ag\_hi}$.  Since the value of $\mathtt{ag\_lo}$ is decreased, the first conjunct, $\mathtt{ag\_lo \leq ag\_hi}$, is preserved. The following derivation shows that the second conjunct of the invariant, $\mathtt{ag\_val=aggregate}_{M,h}\mathtt{(ag\_ar, ag\_lo - 1, ag\_hi)}$, is also preserved:
\begin{align*}
    \mathtt{ag\_val_{new}} & = h\mathtt{(x) \circ ag\_val_{old}} & 
    \\
     & = h\mathtt{(x)} \circ \mathtt{aggregate}_{M,h}\mathtt{(ag\_ar, ag\_lo, ag\_hi)} & \{\textit{By }I_{\mathit{nc}}\} 
     \\
     & = h\mathtt{(x)} \circ h\mathtt{(\left <ag\_ar[ag\_lo] , ~\ldots~ , ag\_ar[ag\_hi - 1] \right >)} & \{\textit{By \autoref{eq:definition-aggregate}}\} 
     \\ 
     & = h\mathtt{(\langle x \rangle \cdot \langle ag\_ar[ag\_lo], ~\ldots~ , ag\_ar[ag\_hi - 1]\rangle )} & \{h\> \textit{homomorphism}\}
     \\
     & = h\mathtt{(\left < ag\_ar[ag\_lo - 1] , ~\ldots~ , ag\_ar[ag\_hi - 1] \right >)} & \{\textit{$\mathtt{x = ag\_ar[i]}$} \>\mbox{and}
     \\
     && i = \mathtt{ag\_lo} - 1\}
     \}
     \\
     & = \mathtt{aggregate}_{M,h}\mathtt{(ag\_ar, ag\_lo - 1, ag\_hi)} & \{\textit{By \autoref{eq:definition-aggregate}}\}
\end{align*}
The second case, in which $\mathtt{i = gh\_hi}$  and $\mathtt{ag\_lo \neq ag\_hi}$, follows symmetrically. In the last case, in which $\mathtt{i}$ is different from $\mathtt{gh\_hi}$, $\mathtt{gh\_lo - 1}$, or it is the case that $\mathtt{ag\_lo = ag\_hi}$, we reset the ghost variables to track the singleton sequence. The invariant clearly holds afterwards. 

We omit the case for
\lstinline{select} statements as it is similar to
\lstinline{store}.

Lastly, the instrumentation code for \verb|aggregate| also satisfies the constraints in \autoref{def:instr-op}: \begin{inparaenum}[(a)]
    \item The instrumentation code obviously terminates.
    \item It only assigns to the variable \verb|t|.
    \item Since the ghost variables are not updated, the invariant is trivially preserved.
    \item To show that the semantics is preserved, we distinguish two cases. In the case \verb|u <= l|, we have defined the value of the aggregation \lstinline[mathescape]!aggregate$_{M,h}$(a, u, l)! as equal to \verb|e| in \autoref{sec:programming-language}, and therefore the semantics is preserved. In the other case, the value of \lstinline[mathescape]!aggregate$_{M,h}$(a, u, l)! is equal to \verb|ag_val| by the instrumentation invariant and therefore, by assigning \verb|ag_val| to \verb|t|, the semantics is preserved.
\end{inparaenum}
\end{proof}

\subsubsection{An Instrumentation Operator for Cancellative Monoids}
\label{sec:cancellative}
When the target monoid is commutative and cancellative, we can track aggregate values faithfully even when storing
\emph{inside} of the tracked interval, unlike in the non-cancellative case.
Examples of cancellative operators are the aggregates \lstinline{\sum} and \lstinline!\numof!.
%

\begin{figure}[!htb]
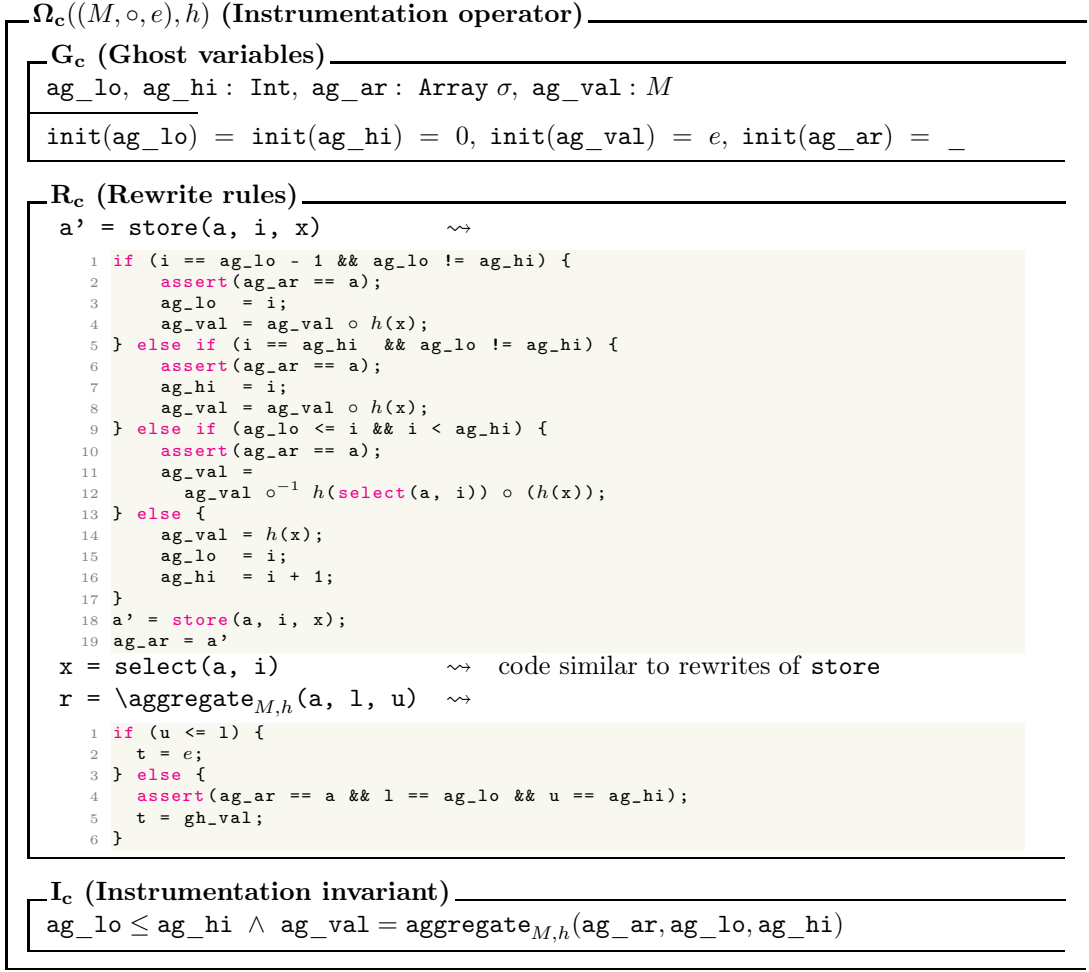

    \centering
    \begin{minipage}{0.87\textwidth}
\begin{class}{\mathbf{\Omega_c}((M,\circ,e),h) \textbf{
(Instrumentation operator)}}
\begin{schema}{\mathbf{G_c} \textbf{ (Ghost variables)}}
    \mathtt{ag\_lo},~\mathtt{ag\_hi:~Int},
    \ \mathtt{ag\_ar:~Array\ }\sigma, 
    \ \mathtt{ag\_val:}~M
    \where
    \mathtt{init(ag\_lo)} ~=~
    \mathtt{init(ag\_hi)} ~=~ 0,\ 
    \mathtt{init(ag\_val)} ~=~ e,\ 
    \mathtt{init(ag\_ar)} ~=~ \_
\end{schema}
\\ 
\begin{schema}{\mathbf{R_c} \textbf{ (Rewrite~rules)}}
\begin{array}{l@{\quad}l@{\quad}l@{\quad\qquad}l}
     \texttt{a' = store(a, i, x)}  &\leadsto&  & \\
     \multicolumn{4}{l}{\hspace{2em}\mbox{\lstinputlisting[linewidth=\opcodesnippetwidth, style=ExtWhileStyleOp]{examples/cancellative_rewrite_store.c}}}
    \\
     \texttt{x = select(a, i)}  &\leadsto&
        \text{code similar to rewrites of \texttt{store}} & 
    \\
    \texttt{r = \textbackslash }\mathtt{aggregate}_{M,h}\texttt{(a, l, u)} &\leadsto&  & \\
     \multicolumn{4}{l}{\hspace{2em}\mbox{\lstinputlisting[linewidth=\opcodesnippetwidth, style=ExtWhileStyleOp]{examples/cancellative_rewrite_aggr.c}}}\\
\end{array}
\end{schema}
\\ 
\begin{schema}{\mathbf{I_c}\textbf{ (Instrumentation invariant)}}
     \mathtt{ag\_lo \leq ag\_hi ~\land~}  \mathtt{ag\_val} = \mathtt{aggregate}_{M,h}\mathtt{(ag\_ar, ag\_lo, ag\_hi)}
\end{schema}
\end{class}
\caption{Instrumentation operator for aggregation parameterised on a \emph{cancellative} target monoid $(M, \circ, e)$ and homomorphism $h$\figstop{}}
    \label{fig:cancel-op}
    \end{minipage}
\end{figure}

To this end, we have defined the instrumentation operator $\caninstropnoargs{}$ in \autoref{fig:cancel-op}. This instrumentation operator can be instantiated with a cancellative monoid $(M, \circ, e)$ and a homomorphism $h$. 
%
%
\iftrue 
The operator differs from $\ncaninstropnoargs{}$ in the rewrite rule for store operations, by adding a new \lstinline!else if! branch (lines 9--12). 
Since the monoid is cancellative and commutative, when storing inside of the tracked interval, 
we first remove 
the current value at the index
being written to 
using $\circ^{-1}$,
and then add the new value using~$\circ$. Instrumentation operators for \lstinline{\sum} and \lstinline{\numof} can be defined in terms of $\Omega_c$, using the definitions in \autoref{tab:acsl_exquans_defs}.
The following result establishes correctness.

\fi 
%
\begin{theorem}[Correctness of $\caninstropnoargs{}$]
\label{thm:can-correctness}
$\caninstropgen{}$ is an instrumentation operator, i.e., it
adheres to the constraints imposed in \autoref{def:instr-op}.
\end{theorem}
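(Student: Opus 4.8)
The plan is to verify that the cancellative operator $\caninstropgen{}$ satisfies all the constraints of \autoref{def:instr-op}, closely following the structure of the proof of \autoref{theorem:correctness-non-cancellative}. Since $\caninstropnoargs{}$ differs from $\ncaninstropnoargs{}$ only in the \texttt{store} rewrite rule---specifically by the addition of the \texttt{else if} branch (lines 9--12) handling stores \emph{inside} the tracked interval---most of the verification carries over verbatim. First I would check the establishment of the invariant: since $\mathtt{init(ag\_lo)} = \mathtt{init(ag\_hi)} = 0$ gives $\mathtt{ag\_lo} \le \mathtt{ag\_hi}$, and $\mathtt{init(ag\_val)} = e = h(\langle\rangle)$ matches the aggregation over the empty interval, the invariant $I_c$ holds initially. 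The termination, no-spurious-assignment, and semantics-preservation conditions for both \texttt{store} and \texttt{aggregate} are identical to the non-cancellative case (the \texttt{aggregate} rule is literally the same), so I would dispatch these by citing \autoref{theorem:correctness-non-cancellative}.

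\textbf{The core new obligation.}

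The only genuinely new work is verifying that the \texttt{else if} branch preserves the instrumentation invariant $I_c$. Here I assume the pre-state satisfies $I_c$, so $\mathtt{ag\_val} = \mathtt{aggregate}_{M,h}(\mathtt{ag\_ar}, \mathtt{ag\_lo}, \mathtt{ag\_hi})$, and the assertion equating $\mathtt{ag\_ar}$ with the program array may be treated as an assumption. In this branch we store at an index $\mathtt{i}$ with $\mathtt{ag\_lo} \le \mathtt{i} < \mathtt{ag\_hi}$, so the bounds are unchanged and the first conjunct $\mathtt{ag\_lo} \le \mathtt{ag\_hi}$ is trivially preserved. For the second conjunct, the key is a derivation analogous to the displayed one in the previous proof, but exploiting cancellativity: the new tracked value is obtained by cancelling the old element $h(\mathtt{ag\_ar[i]})$ out of $\mathtt{ag\_val}$ and then composing with $h(\mathtt{x})$ for the new element. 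Because the monoid is commutative, I can use the single partial inverse $\circ^{-1}$ as set up after the definition of cancellative monoid, and freely permute factors so that the element at position $\mathtt{i}$ is isolated. The chain of equalities would read: starting from $\mathtt{ag\_val_{new}} = (\mathtt{ag\_val_{old}} \circ^{-1} h(\mathtt{ag\_ar[i]})) \circ h(\mathtt{x})$, expand $\mathtt{ag\_val_{old}}$ as the homomorphic image of the slice, use commutativity to move the $\mathtt{i}$-th factor to the front, apply the cancellation law $(h(\mathtt{ag\_ar[i]}) \circ w) \circ^{-1} h(\mathtt{ag\_ar[i]}) = w$ to remove it, then re-insert $h(\mathtt{x})$ and use the homomorphism property in reverse to recognise the result as the aggregate over the updated array $\mathtt{ag\_ar'} = \mathtt{store(ag\_ar, i, x)}$ on the same interval.

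\textbf{Anticipated obstacle and remaining cases.}

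The main obstacle is handling the commutative reordering rigorously: the aggregate is defined (\autoref{eq:definition-aggregate}) as a homomorphic image of an \emph{ordered} sequence, so to cancel the middle element I must justify, via commutativity of $\circ$ and the homomorphism property, that the image of the slice equals $h(\mathtt{ag\_ar[i]}) \circ h(\langle\text{remaining elements in order}\rangle)$ regardless of $\mathtt{i}$'s position. I would state this cleanly as a small observation---that for a commutative target monoid the aggregate is invariant under permutation of the slice---and invoke it, rather than grinding through index arithmetic. This is also precisely where commutativity is indispensable and why the cancellative operator is restricted to commutative monoids. The remaining branches of the \texttt{store} rewrite (extending the interval at $\mathtt{ag\_lo}-1$ or $\mathtt{ag\_hi}$, and the reset-to-singleton case) are unchanged from $\ncaninstropnoargs{}$, so I would simply note that their invariant-preservation arguments are identical to those already given in the proof of \autoref{theorem:correctness-non-cancellative}, and conclude.
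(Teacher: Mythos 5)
Your proposal is correct and follows exactly the route the paper intends: the paper's own proof of this theorem is a one-line deferral to the proof of \autoref{theorem:correctness-non-cancellative}, and your argument carries out precisely that plan, reusing the unchanged cases and isolating the single new obligation (invariant preservation in the \texttt{else if} branch) where commutativity and the partial inverse $\circ^{-1}$ are used to cancel the overwritten element. Your explicit treatment of the reordering step and of why the partial inverse is defined at the relevant arguments is in fact more detailed than what the paper provides.
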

\begin{proof}
    Similar to the proof of \autoref{theorem:correctness-non-cancellative}.
\end{proof}

\begin{corollary}\label{corollary:correctness-instr-acsl}
   There are correct instrumentation operators for all ACSL (extended) quantifiers.
\end{corollary}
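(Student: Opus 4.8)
The plan is to prove the corollary by a finite case analysis over the rows of \autoref{tab:acsl_exquans_defs}, instantiating one of the two parameterised operators, $\ncaninstropnoargs{}$ or $\caninstropnoargs{}$, for each ACSL (extended) quantifier and then discharging the hypotheses of the corresponding correctness theorem (\autoref{theorem:correctness-non-cancellative} or \autoref{thm:can-correctness}). Since those theorems already establish adherence to the constraints of \autoref{def:instr-op} for arbitrary terminating, assignment-free implementations of $\circ$ and $h$, the only work left per row is to (a)~pick the right operator, (b)~verify that the listed structure really is a commutative monoid — and, when $\caninstropnoargs{}$ is used, that it is cancellative — and (c)~exhibit terminating implementations of the monoid operation and of the generating homomorphism that assign to no variables.

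First I would classify the monoids by cancellativity. Only the additive group $(\mathbb{Z}, +, 0)$, used for $\texttt{\textbackslash sum}$ and $\texttt{\textbackslash numof}$, is cancellative, so for these two I would instantiate $\caninstropnoargs{}$, taking $\circ^{-1}$ to be integer subtraction. The remaining monoids are not cancellative: the $\min$ and $\max$ monoids fail cancellation (e.g.\ $\min(1,5)=\min(1,3)$ yet $5\neq 3$), the product monoid $(\mathbb{Z},\cdot,1)$ fails at the zero element, and neither $(\mathbb{B},\wedge,\mathit{true})$ nor $(\mathbb{B},\vee,\mathit{false})$ is cancellative (e.g.\ $\mathit{false}\wedge x=\mathit{false}\wedge y$ for all $x,y$). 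For each of these I would instantiate $\ncaninstropnoargs{}$.

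Next I would discharge hypotheses (b) and (c). Commutativity and associativity of $+$, $\cdot$, $\min$, $\max$, $\wedge$, $\vee$ are standard, the stated neutral elements are indeed neutral, and cancellativity of $(\mathbb{Z},+,0)$ is immediate; commutativity of every listed target monoid also satisfies the standing requirement of \autoref{def:aggregation}. The operations $+$, $\cdot$, $\wedge$, $\vee$ are primitives of the core language and hence have terminating, assignment-free implementations, while $\min$ and $\max$ are expressible by a single conditional; the generating homomorphisms are either the identity $\langle x\rangle\mapsto x$, the conditional $\langle b\rangle\mapsto b\,?\,1:0$ for $\texttt{\textbackslash numof}$, or the predicate evaluation $\langle (x,i)\rangle\mapsto P(x,i)$ for $\forall,\exists$ — all pure, terminating expressions. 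With these in hand, \autoref{theorem:correctness-non-cancellative} and \autoref{thm:can-correctness} yield correctness of the instantiated operators, and hence of $\suminstrop{}$, $\mininstrop{}$, $\maxinstrop{}$, $\prodinstrop{}$, $\numofinstrop{}$, $\forallinstrop{}$ and $\existsinstrop{}$.

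I expect two subtleties to be the main obstacles. First, the operators $\forall$ and $\exists$ require the generalised $\texttt{aggregate'}$ that passes array indices, not merely values, to $h$, whereas $\ncaninstropnoargs{}$ as written feeds only element values; I would note that the correctness argument of \autoref{theorem:correctness-non-cancellative} is insensitive to this change, since the only use of the homomorphism is the chain of equalities exploiting $h(\langle\cdot\rangle\cdot\langle\cdots\rangle)=h(\langle\cdot\rangle)\circ h(\langle\cdots\rangle)$, which holds equally for an index-carrying $h$, so the theorem transfers verbatim to $\texttt{aggregate'}$. Second, the monoids for $\min$ and $\max$ live in $\mathbb{Z}_{\pm\infty}$, which is not literally a subset of the integer domain demanded by \autoref{def:aggregation}; I would resolve this either by working over an integer type extended with a sentinel (as suggested by the footnote on $\texttt{INT\_MIN}$) or by adjoining the extra element via a Boolean flag, in both cases retaining terminating, assignment-free implementations of $\circ$ and $h$. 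Everything else is routine enumeration.
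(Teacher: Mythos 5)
Your proposal is correct and follows essentially the same route as the paper: the paper's proof likewise instantiates $\ncaninstropnoargs{}$ for \lstinline!\forall!, \lstinline!\exists!, \lstinline!\product!, \lstinline!\min!, \lstinline!\max! and $\caninstropnoargs{}$ for \lstinline!\sum!, \lstinline!\numof! using \autoref{tab:acsl_exquans_defs}, appealing to \autoref{theorem:correctness-non-cancellative} and \autoref{thm:can-correctness} respectively. You additionally spell out the side conditions (monoid axioms, cancellativity, terminating assignment-free implementations, the \texttt{aggregate'} index-passing variant, and the $\mathbbm{Z}_{\pm\infty}$ domain issue) that the paper leaves implicit, which is a faithful elaboration rather than a different argument.
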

\begin{proof}
The non-cancellative cases (\lstinline!\forall!, \lstinline!\exists!, \lstinline!\product!, \lstinline!\min!, \lstinline!\max!) follow from \autoref{theorem:correctness-non-cancellative}, by instantiating $(M, \circ, e)$ and $h$ according to \autoref{tab:acsl_exquans_defs}. The cancellative cases (\lstinline!\sum!, \lstinline!\numof!) follow similarly from \autoref{thm:can-correctness}.
\end{proof}

It is worth pointing out that we view $\ncaninstropnoargs{}$ and $\caninstropnoargs{}$ as a \emph{base recipe} for how to implement instrumentation operators for aggregation, but that one can further \emph{tailor} the instrumentation to specific aggregations. For example, the instrumentation operator for \lstinline!\forall! in \autoref{fig:instr-op-forall} specialises the rewriting of the aggregation, utilising that if a property holds for some portion of the array, then it also holds for a subset of that portion.

\subsubsection{Cancelling the Non-cancellative}\label{sec:cancel-non-cancel}
Up until now, we have been using the same monoid to define the instrumentation operator, henceforth denoted as $(M_{\mathit{aggr}},\circ_{\mathit{aggr}},e_{\mathit{aggr}})$, as we used to define the semantics of the extended quantifier itself. In this section, we create instrumentation operators using a different, \emph{cancellative} monoid, denoted as $(M_{\mathit{instr}},\circ_{\mathit{instr}},e_{\mathit{instr}})$, and parameterise our instrumentation operator on another homomorphism $g: M_{\mathit{instr}} \rightarrow M_{\mathit{aggr}}$ to map from the instrumentation monoid to the aggregation monoid when instrumenting an assignment using \verb|aggregate|. This means we slightly alter the definition of $\Omega_c$ in \autoref{fig:cancel-op} to be parameterised on a homomorphism $g$ and we change line 5 of the rewrite rule for \lstinline[mathescape]!r = aggregate$_{M,h}$(a, u, l)! to \verb|t = g(ag_val)|. 

%
We can use this technique to create a cancellative instrumentation operator for universal and existential quantification, instead of the non-cancellative operator we defined in \autoref{sec:non-cancellative}. By noting that an existential quantifier for a predicate~$P$ is true for an interval if the number of elements satisfying~$P$ in that interval is larger than~$0$, we can reuse the cancellative monoid we defined for \verb|\numof|. Whenever we instrument an assignment using \verb|\exists|, we use the homomorphism $g_{\exists,P}$ (see \autoref{eq:g-exists}) to map the number of elements satisfying the predicate to the result of the existential quantifier. This now allows us to define a stronger instrumentation operator for existential quantification as in \autoref{eq:stonger-instr-exists}.
%
\begin{align}
    g_{\exists, P}(c) & \defeq \twopartdef{\mathsf{true}~~}{c > 0}{\mathsf{false}} \label{eq:g-exists}\\
    \Omega_{\exists, P} & \defeq \Omega_c({\left (\mathbbm{Z}, +, 0\right ),\, h_{\mathit{numof},P},\, g_{\exists, P}}) \label{eq:stonger-instr-exists}
\end{align}
A similar technique can be used to create a cancellative operator for \lstinline{\product}.
By noting that multiplication is cancellative over integers if we explicitly exclude $0$, it makes sense to define a monoid for multiplication in which the elements are tuples $( p, c )$ where $p$ is the product of the non-zero elements and $c$ is the number of elements equal to zero. This cancellative monoid can then be used to define $\Omega_{\mathit{prod}}$:
\begin{align*}
    M_{\mathit{prod}} & \defeq \set{\left (p, c \right ) \mid p \in \mathbbm{Z}^{\neq 0},\ c \in \mathbbm{N}} \\
    \left (p_1, c_1 \right ) \circ_{\mathit{prod}} \left (p_2, c_2 \right ) & \defeq \left ( p_1 \cdot p_2, c_1 + c_2 \right ) \\
    e_{\mathit{prod}} & \defeq \left ( 1, 0 \right ) \\
    h_{\mathit{prod}}~\text{\small{homomorphism generated by }}\left<x\right> & \mapsto \twopartdef{\left ( x, 0 \right ) }{x \neq 0 }{\left ( 1, 1 \right )} \\
    g_{\mathit{prod}}(\left (p, c\right )) & \defeq  \twopartdef{p~~}{c = 0}{0} \\
    \Omega_{\mathit{prod}} & = \Omega_c({(M_{\mathit{prod}}, \circ_{\mathit{prod}}, e_{\mathit{prod}}),\, h_{\mathit{prod}},\, g_{\mathit{prod}}})
\end{align*}


\subsection{Deductive Verification of Instrumentation Operators}
As stated in \autoref{sec:instrumentation_operators},
instrumentation operators may be verified independently
of the programs to be instrumented.
The operators described in this article, i.e.,
operators for the (extended) quantifiers listed in \autoref{tab:acsl_exquans_defs} and the operator in \autoref{fig:squareInstrumentation},
have been verified in the verification tool
Frama-C~\cite{DBLP:conf/sefm/CuoqKKPSY12}.
The verification checks that the operators adhere to the constraints listed in \autoref{def:instr-op}.
The verified instrumentations are adaptations for the
C~language semantics and execution model.
More specifically, the adapted operators assume C native arrays,
rather than functional ones.




\iftrue 
\section{Application of Multiple Instrumentation Operators}
\label{sec:operator-composition}

The instrumentation operators considered so far have instrumented programs for a single, specific type of property, such as the $\maxinstrop{}$ operator for programs containing the \lstinline{\max} extended quantifier. In many cases this is sufficient, and the appropriate operator can be selected manually or by some simple syntactic analysis of the program to be verified. It is also common, however, that programs have to be rewritten multiple times to obtain an instrumented program that can be verified easily. This happens, for instance, whenever a program contains occurrences of multiple different extended quantifiers, or multiple occurrences of the same quantifier.

There are several ways to handle such cases in our framework. Naturally, instrumentation can be iterated: starting with a program~$P = P_0$, a chain~$P_0, P_1, P_2, \ldots, P_k$ can be constructed in which each $P_{i+1}$ is obtained by applying some instrumentation operator to $P_i$. If the final program~$P_k$ can successfully be verified, the correctness of $P = P_0$ follows.

Multiple instrumentation operators can also be composed to create new operators, which can instrument a program for many different properties, or many different occurrences of the same type of property, at once.

\iffalse
\newcommand{\opcomp}{\ensuremath{\cup}}
\else
\newcommand{\opcomp}{\ensuremath{\parallel}}
\fi
\begin{definition}[Composition of Instrumentation Operators]\label{def:compose}
    Two instrumentation operators $\Omega = (G, R, I)$ and $\Omega' = (G', R', I')$, with distinct ghost variables, can be composed to create a new instrumentation operator $\Omega'' = \Omega \opcomp \Omega'$:
    \begin{equation}
        \Omega' \opcomp \Omega \defeq (G \cup G', R \cup R', I \land I')
    \end{equation}
\end{definition}

The operators having distinct ghost variables means that, if $\mathit{vars}(G)$ is a function returning all the variable names in $G$ (i.e., the first element of each tuple), then $\mathit{vars}(G) \cap \mathit{vars}(G') = \varnothing$.
Note that it is possible to compose two operators of the same type, as long as the ghost variable names are substituted with fresh ones.

\begin{lemma}[Correctness of $\Omega \opcomp \Omega'$]
    If $\Omega = (G, R, I)$ and $\Omega' = (G', R', I')$ are both correct instrumentation operators (i.e., they adhere to \autoref{def:instr-op}), then the composed instrumentation operator $\Omega \opcomp \Omega'$ is also correct, i.e., it adheres to the conditions imposed in \autoref{def:instr-op}.
\end{lemma}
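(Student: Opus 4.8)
The plan is to verify directly that the tuple $\Omega \opcomp \Omega' = (G \cup G', R \cup R', I \wedge I')$ satisfies the two conditions of \autoref{def:instr-op}, crucially exploiting that the ghost variables are disjoint, i.e. $\mathit{vars}(G) \cap \mathit{vars}(G') = \varnothing$, and that the invariants $I$ and $I'$ each only mention the ghost variables of their own operator. For the first condition, the combined initial state maps every variable of $G \cup G'$ to its initial value; since $I$ refers only to variables in $\mathit{vars}(G)$ and holds under $G$'s initial assignment (by correctness of $\Omega$), and the extra assignments to $G'$-variables cannot affect the truth of $I$, we get that $I$ still holds, and symmetrically $I'$ holds, so $I \wedge I'$ is established initially.

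For the second condition I would fix an arbitrary rewrite rule $r \assign t \leadsto s$ in $R \cup R'$ and, by symmetry, assume it comes from $R$ (the case of $R'$ being identical after swapping the roles of the two operators). Conditions (a) and (d) are then inherited almost for free: since $I \wedge I' \Rightarrow I$, precondition strengthening (the rule of consequence) turns the termination guarantee and the semantics-preservation triple $\{ I \}\ \texttt{z} \assign t\semic s'\ \{ \texttt{z} = r \}$ of $\Omega$ into the corresponding statements with precondition $I \wedge I'$. Condition (b) is immediate: $s$ assigns only to $r$ and to variables of $G$, and $\mathit{vars}(G) \subseteq \mathit{vars}(G \cup G')$, so all its write targets are among $\{r\} \cup \mathit{vars}(G \cup G')$ as required.

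The only step with real content is condition (c), the preservation of the composed invariant, $\{ I \wedge I' \}\ s'\ \{ I \wedge I' \}$. Here I would argue by the frame rule. From correctness of $\Omega$ we already have $\{ I \}\ s'\ \{ I \}$. The key observation is that $s'$ (coming from $R$) writes only to $r$ and to ghost variables in $\mathit{vars}(G)$, while the free variables of $I'$ lie in $\mathit{vars}(G')$; by disjointness of the ghost variables, and since the meta-variable $r$ is treated as an ordinary program variable distinct from all ghost variables, $s'$ modifies no variable occurring free in $I'$. Hence $\{ I' \}\ s'\ \{ I' \}$ holds trivially, and combining this frame property with $\{ I \}\ s'\ \{ I \}$ yields $\{ I \wedge I' \}\ s'\ \{ I \wedge I' \}$.

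I expect the main obstacle to be making precise the framing argument in condition (c), specifically justifying that $r$ may be assumed disjoint from $\mathit{vars}(G')$: this is where the hypothesis of distinct ghost variables, together with the convention that ghost variables are fresh with respect to the program variables that $r$ ranges over, does the essential work. Everything else reduces to monotonicity of Hoare triples under strengthening of the precondition.
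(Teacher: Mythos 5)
Your proposal is correct and follows essentially the same route as the paper's proof: establish $I \wedge I'$ initially via disjointness of the ghost variables, inherit termination, assignment-target, and semantics-preservation conditions by strengthening the precondition from $I$ to $I \wedge I'$, and obtain preservation of $I \wedge I'$ by combining $\{I\}\,s'\,\{I\}$ with the observation that rules from $R$ do not touch the variables of $I'$. Your version is merely more explicit (naming the frame rule and the rule of consequence, and flagging the freshness of ghost variables relative to $r$) than the paper's terser argument.
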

\begin{proof}
    Let $\Omega = (G, R, I)$, $\Omega' = (G', R', I')$ and $\Omega'' = \Omega \opcomp \Omega' = (G'', R'', I'')$. 


    By \autoref{def:instr-op}, $I$ and $I'$ are invariants over only the the variables in $G$ and $G'$, respectively.
    Since the sets of variables defined $G$ and $G'$ are disjoint, then it follows that their initial values also establish $I \land I'$.

    Since the rewrite rules are not changed, it follows that that they still terminate and assign only to the allowed variables.

    For the remaining two constraints, we consider the rewrite rules from $R$ (the proof is symmetrical for the rules from $R'$).
    Since the rules in $R$ preserve the invariant $I$ and we start from a stronger invariant $I \land I'$, it follows $I$ must hold afterwards. Since the rules in $R$ do not mention the variables in $G'$, $I'$ must also still hold, and $I \land I'$ is preserved.
    Similarly, since the rules in $R$ preserve the semantics of the original statement under $I$, it must also do so under the stronger $I \land I'$.
\end{proof}

\fi 

\section{Correctness of Instrumentation Operators}
\label{sec:instrumentation-correctness}

This section explains how we can reason about the \emph{correctness} of instrumentation operators.
For this, in \autoref{subsec:instrumentation-soundness} we introduce the two notions of \emph{soundness} and \emph{weak completeness,} which are properties that hold by construction for every instrumentation operator.
In \autoref{subsec:instrumentation-completeness} we show that for certain classes of programs, the instrumentation operators are also \emph{complete} in a stronger sense.

\subsection{Soundness and Weak Completeness}
\label{subsec:instrumentation-soundness}

Verification of an instrumented program produces one of two
possible results: a \emph{witness} if verification is successful, or a \emph{counterexample} otherwise.
A witness consists of the inductive invariants needed to verify the
program, and is presented in the context of the programming language:
it is translated back from the back-end theory used by the
verification tool,
and is a formula over the program variables and the ghost variables
added during instrumentation.
A counterexample is an execution trace leading to a failing assertion.

\begin{definition}[Soundness]
\label{def:instr-op-correctness}
An instrumentation operator~$\Omega$ is called \emph{sound} 
if for every program~$P$ and instrumented program~$P' \in \Omega (P)$,
whenever there is an execution of~$P$ where some \texttt{assert} 
statement fails, then 
there also is an execution of~$P'$ where some 
\texttt{assert} statement fails.
\end{definition}

Equivalently, existence of a witness for an instrumented program
entails existence of a witness for the original program,
in the form of a set of inductive invariants solely over the program variables.
Notably, because of the semantics-preserving nature
of the rewrites under the instrumentation invariant, a witness for the
original program can be derived from one for the instrumented
program. 
One such back-translation is to add the instrumentation invariant
as a conjunct to the original witness, and to add existential
quantifiers for the ghost variables.
%



\iftrue 
\begin{example}
To illustrate the back-translation,
we return to the instrumentation operator from
\autoref{fig:squareInstrumentation} and the example
program from \autoref{fig:ex-non-linear-arithmetic-code}.
The witness produced by our verification tool in this case
is the formula:
\begin{equation*}
\begin{split}
\mathtt{i} = \mathtt{x\_shad} \wedge
\mathtt{x\_sq} + \mathtt{x\_shad} = 2s\wedge
\mathtt{N} \geq \mathtt{i} \wedge
\mathtt{N} \geq \mathtt{1} \wedge
2\mathtt{s} \geq \mathtt{i} \wedge
\mathtt{i} \geq  0
\end{split}
\end{equation*}
After conjoining the instrumentation invariant
$\mathtt{x\_sq} = \mathtt{x\_shad}^2$
and adding existential quantifiers, 
we obtain
an inductive invariant that is sufficient to verify the original program:
\begin{equation*}
\begin{split}
\exists x_\mathrm{sq}, x_\mathrm{shad}.\; (
\mathtt{i} = x_\mathrm{shad} \wedge
x_\mathrm{sq} + x_\mathrm{shad} = 2s \wedge~~~~~~~~~~
\\
\mathtt{N} \geq \mathtt{i} \wedge
\mathtt{N} \geq \mathtt{1} \wedge
2\mathtt{s} \geq \mathtt{i} \wedge
\mathtt{i} \geq  0
\wedge x_\mathrm{sq} = x_\mathrm{shad}^2)
\end{split}
\end{equation*}
\end{example}
\else 
\begin{example}
To illustrate the back-translation,
we return to the instrumentation operator from
\autoref{fig:instr-op-forall} and the example
program from \autoref{fig:quantified_example} and \autoref{fig:quantified_example_instr}.
The witness produced by our verification tool in this case
is the formula:
\begin{equation*}
\begin{split}
\mathtt{qu\_P} \wedge
\mathtt{qu\_ar} = \mathtt{a} \wedge
\mathtt{qu\_lo} = 0 \wedge
\mathtt{qu\_hi} \geq \mathtt{i} \wedge
\mathtt{N} \geq \mathtt{i} \wedge
\mathtt{N} \geq 1 \wedge 
\mathtt{i} \geq 0
\end{split}
\end{equation*}
After conjoining the instrumentation invariant
\begin{align*}
\mathtt{qu\_lo}& = \mathtt{qu\_hi} \; \lor \\
     &(\mathtt{qu\_lo} < \mathtt{qu\_hi}\land 
     \mathtt{qu\_P} = 
     \mathtt{forall(qu\_ar, qu\_lo, qu\_hi,} \lambda\mathtt{(x, i).P))}
\end{align*}
and existentially quantifying over the involved ghost variables, 
we obtain
an inductive invariant that is sufficient to verify the original program:
\begin{equation*}
\begin{split}
\exists \, q_\mathrm{ar}, q_\mathrm{lo}, q_\mathrm{hi}, q_\mathrm{P}.\; (
q_\mathrm{P} \wedge
q_\mathrm{ar} = \mathtt{a} \wedge
q_\mathrm{lo} = 0 \wedge
q_\mathrm{hi} \geq \mathtt{i} \wedge
\mathtt{N} \geq \mathtt{i} \wedge
\mathtt{N} \geq 1 \wedge 
\mathtt{i} \geq 0
\wedge~~~~~~~~~~
\\
( 
    q_\mathrm{lo} = q_\mathrm{hi}
    \vee (q_\mathrm{lo} < q_\mathrm{hi} \wedge
    q_\mathrm{P} =  \mathtt{forall(q_\mathrm{ar}, q_\mathrm{lo}, q_\mathrm{hi},} \lambda\mathtt{(x, i).P))}
)
\end{split}
\end{equation*}
\end{example}
\fi

%
\begin{definition}[Weak Completeness]
\label{def:weak-completeness}
The operator~$\Omega$ is called \emph{weakly complete} 
if for every program~$P$ and instrumented program~$P' \in \Omega(P)$,
whenever an \texttt{assert} statement that
has not been added to the program by the instrumentation fails in the instrumented
program~$P'$, then it also fails in the original program~$P$.
\end{definition}
Similarly to the back-translation of invariants,
when verification fails, counterexamples
for assertions of the original program, found during
verification of the instrumented program, can be translated back to
counterexamples for the original program.
We thus obtain the following result.
%
\begin{theorem}[Soundness and weak completeness]
\label{thm:soundness-weak-completness}
Every instrumentation operator~$\Omega$ is sound and weakly complete.
\end{theorem}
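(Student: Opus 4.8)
The plan is to prove soundness and weak completeness by exhibiting a tight operational correspondence between executions of the original program~$P$ and executions of any instrumented program~$P' \in \Omega(P)$, with the instrumentation invariant~$I$ serving as the coupling relation on the ghost state. The central observation is that \autoref{def:instr-op} was engineered precisely so that each individual rewrite~$r \assign t \leadsto s$ is semantics-preserving for the original variables whenever~$I$ holds: condition~2(d) guarantees that, starting from a state satisfying~$I$, the block~$s$ assigns to~$r$ exactly the value that~$r \assign t$ would have, and condition~2(c) guarantees that~$s$ re-establishes~$I$. Conditions~2(a) and~2(b) ensure~$s$ terminates and touches only~$r$ and the ghost variables, so the non-ghost state evolves identically in~$P$ and~$P'$.

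First I would set up the formal machinery: define a simulation relation between states of~$P$ and states of~$P'$ that requires equality on all original program variables and additionally requires the ghost-state component of the~$P'$ state to satisfy~$I$. The base case is constraint~1 of \autoref{def:instr-op}, which states that the initial ghost values (installed by the prepended assignments~$\mathtt{x}_1 \assign \mathit{init}_1\semic \ldots\semic \mathtt{x}_k \assign \mathit{init}_k$) satisfy~$I$; together with the fact that~$P$ and~$P'$ agree on all original variables initially, this establishes the relation at program entry. I would then prove an inductive step by a case analysis on the program statement being executed: for unmodified statements the correspondence is immediate since~$P'$ executes the identical statement (and~$I$, being a formula over ghost variables only, is untouched); for a rewritten assignment, I invoke condition~2(d) to conclude the original variable~$r$ receives the same value on both sides, and condition~2(c) to conclude~$I$ is preserved, so the simulation relation is maintained. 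The normalisation in case~(ii) of \autoref{def:matchingRule} is what legitimises treating the right-hand side as stable during~$s$, which is needed for~2(d) to apply cleanly.

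Given this operational correspondence, both halves of the theorem follow by examining assertions. For \emph{soundness}, suppose some \texttt{assert} in~$P$ fails along an execution; because the simulation relation gives an execution of~$P'$ that reaches the corresponding program point with identical values of original variables, the same \texttt{assert} (which is carried over verbatim into~$P'$) evaluates its argument to \textsf{false} and fails in~$P'$. For \emph{weak completeness}, suppose an \texttt{assert} that was \emph{already present} in~$P$ (i.e., not introduced by the instrumentation) fails in~$P'$; the same execution, projected onto the original variables, is a valid execution of~$P$ reaching the same assertion with the same values, so the assertion fails in~$P$ as well. The restriction to non-instrumentation assertions in \autoref{def:weak-completeness} is exactly what is needed here, since the freshly introduced \texttt{assert} statements (such as the~$\mathtt{assert(qu\_ar == a)}$ checks) have no counterpart in~$P$.

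The main obstacle I anticipate is making the simulation argument precise in the presence of loops and the abnormal-termination semantics of \texttt{assume}/\texttt{assert} from the Flanagan--Saxe model. One must be careful that an instrumented assertion~$s$ could itself fail before reaching an original assertion; for soundness this is harmless (a failing instrumented assertion only adds failing executions, which cannot make a wrong program appear correct), but for weak completeness I must argue that if the \emph{specific} original assertion fails, the preceding instrumentation code did \emph{not} diverge or abnormally terminate first along that trace — which is ensured because the trace under consideration is one that actually reaches the original assertion, and along it each~$s$ ran to normal completion. A second subtlety is that condition~2(c) is stated for~$s'$, the variant of~$s$ with \texttt{assert} replaced by \texttt{assume}; I must note that invariant preservation for~$s'$ implies the needed property for~$s$ along any trace that does not abort inside~$s$, which is exactly the traces relevant to both conclusions. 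These points are routine once the simulation relation and the coupling via~$I$ are set up carefully, so I would keep the loop case as a straightforward induction on the number of loop iterations folded into the overall structural/operational induction.
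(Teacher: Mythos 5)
Your proposal is correct and rests on exactly the same pillars as the paper's own proof: the instrumentation invariant~$I$ holds initially and is preserved by every rewrite, rewrites assign only to $r$ and the ghost variables and preserve the value assigned to~$r$, hence original program variables carry identical valuations in $P$ and $P'$ and the untouched \texttt{assert} statements fail in one exactly when they fail in the other. The only difference is presentational --- you package the argument as an explicit simulation relation and argue soundness directly by transporting a failing trace of~$P$ into~$P'$, whereas the paper argues that direction via back-translation of witnesses; your treatment of the edge cases (instrumentation assertions failing mid-trace, termination of the inserted code under~$I$, and the $s$ versus $s'$ distinction in condition~2(c)) is in fact more careful than the paper's.
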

\begin{proof}
Let~\instropdef{} be an instrumentation operator.
Since $I$ is a formula over ghost variables only,
which holds initially and is preserved by all rewrites,
$I$~is an invariant of the fully instrumented program.
This entails that rewrites of assignments are semantics-preserving.
Furthermore, since instrumentation code only assigns to ghost variables
or to $r$ (i.e., the left-hand side of the original
statement), program variables have the same valuation
in the instrumented program as in the original one.
Furthermore, since all rewrites are terminating under~$I$, the instrumented program will terminate if and only if the original
program does.

In the case when verification succeeds, and a witness
is produced, weak completeness follows vacuously.
A witness consists of the inductive invariants sufficient to verify
the instrumented program.
Thus, they are also sufficient to verify the
assertions existing in the original program,
since assertions are not rewritten and all program variables
have the same valuation in the original and the instrumented programs.
%
Since a witness for the instrumented program can be back-translated
to a witness for the original program, any failing assertion
in the original program must also fail after instrumentation,
and $\Omega$ is therefore sound.

In the case when verification fails, soundness follows vacuously,
and if the failing assertion was added during instrumentation, also
weak completeness follows.
If the assertion existed in the original program, since such assertions
are not rewritten, and since program variables have the same valuation
in the instrumented program as in the original program,
then any counterexample for the instrumented program is also
a counterexample for the original program, when projected
onto the program variables.
\end{proof}


\subsection{Completeness}
\label{subsec:instrumentation-completeness}

Weak completeness of an instrumentation operator only guarantees that instrumentation does not affect the validity of assertions that are left intact by the instrumentation.
A ``proper'' notion of completeness would state something considerably stronger, namely, that for any correct program, there is a program in the instrumentation space of the operator that is also correct -- with the intention that the instrumented program can be automatically verified. 
This latter aspect can be captured by restricting the operators or instructions that the instrumented program is allowed to contain. 


To formalise this notion of completeness,
let~$\mathcal{A}$ denote a class of programs; $\mathcal{A}$ could, for instance, contain all programs that do not use certain aggregation operators, and which will therefore be handled well by a given back-end solver.
For a program~$P$ and an instrumentation operator~$\Omega$, we then denote by $\Omega_\mathcal{A} (P) = \Omega(P) \cap \mathcal{A}$ the set of instrumented versions of~$P$ that belong to $\mathcal{A}$. 

\begin{definition}[$\mathcal{A}$-Relative Completeness]
\label{def:completeness-dg}
Let $\Omega$ be an instrumentation operator, $\mathcal{A}$~a class of programs, and $P$ a program. We say that \emph{$\Omega$~is complete for~$P$} and~$\mathcal{A}$ if and only if the correctness of~$P$ implies the correctness of some instrumented program~$P' \in \Omega_\mathcal{A} (P)$:
$$ \mathsf{correct}(P) \>\Rightarrow\> \exists P' \in \Omega_\mathcal{A} (P).\, \mathsf{correct}(P') $$
We denote with~$\mathcal{P}_{\Omega_\mathcal{A}}$ the class of programs for which $\Omega$ is complete for~$\mathcal{A}$. 
\end{definition}

Note that the above notion of $\mathcal{A}$-relative completeness assumes that a single application of an instrumentation operator suffices to eliminate, in the program~$P$, all problematic constructs of the type for which the operator is defined. 
This, of course, is a strong assumption. In many cases, a single application will not suffice, but a series of applications of the same operator may achieve the task. 
Fortunately, it is straightforward to generalise the definition of  $\mathcal{A}$-relative completeness for this case. 
%
%
Let $\Omega^\ast (P)$ denote the set of programs that result from applying the instrumentation operator~$\Omega$ to the program~$P$ any finite number of times in a sequence. One can then define $\Omega^\ast_\mathcal{A} (P) = \Omega^\ast (P) \cap \mathcal{A}$, and adapt \autoref{def:completeness-dg} accordingly. We do not lose any generality by iteratively applying the \emph{same} operator a finite number of times, since we can leverage \autoref{def:compose} to compose multiple, different operators into a single operator that subsumes its parts. 

The notion of $\mathcal{A}$-relative completeness can only be of practical value if one can formulate \emph{conditions on programs} that are sufficient for a program to belong to the class~$\mathcal{P}_{\Omega_\mathcal{A}}$.
The approach we take here is to formulate, for concrete instrumentation operators~$\Omega$ and program classes~$\mathcal{A}$, conditions on programs~$P$ that ensure that there is a program~$P'$ in the instrumentation space~$\Omega_\mathcal{A} (P)$ in which none of the assertions added to~$P$ by the instrumentation fails. 
By virtue of weak completeness, $\Omega$~must then be complete for~$P$ and~$\mathcal{A}$.

\subsubsection{Completeness Conditions for Non-Cancellative Operators}

As the base case, we focus on the class~$\mathcal{A}_B$ of programs in our core  language  that do not contain any aggregation expressions. Completeness with respect to the class~$\mathcal{A}_B$ captures the ability of an operator to \emph{eliminate} aggregation expressions from a program; the resulting program can then be processed by many existing software verification tools, which may be unable to handle aggregation, but are otherwise fully automatic.

\begin{definition}[Class $\mathcal{P}_{nc}((M, \circ, e), h)$]
\label{def:prog-cond-noncanc}
For given target monoid $(M, \circ, e)$ and monoid homomorphism $h: D^*_\sigma \rightarrow M$, we define $\mathcal{P}_{nc}((M, \circ, e), h)$ as the class of programs~$P$ satisfying the following conditions:
\begin{enumerate}
\item The program~$P$ contains exactly one statement $\mathtt{r = \textbackslash aggregate_{M,h}(a, l, u)}$ involving aggregation, for an array variable~$\typeJudg{\texttt{a}}{\texttt{Array}~\sigma}$.
\item In any execution of $P$, prior to aggregation the array elements \lstinline!a[i]! between \texttt{l} and $\texttt{u} - 1$  are either read using \verb|select(a, i)| or updated with \verb|a = store(a, i, x)|.
\item These selects and stores occur sequentially, starting at \texttt{l} and ending at $\texttt{u} - 1$.
\item No other accesses or assignments to \lstinline!a! occur in $P$.
\end{enumerate}
\end{definition}

It should be pointed out that the class~$\mathcal{P}_\mathit{nc}$ is a \emph{semantic} class; in fact, it is not decidable whether a given program belongs to $\mathcal{P}_\mathit{nc}$ or not, since conditions~2--4 cannot be checked effectively in the general case. The criteria are still useful to characterise cases in which the non-cancellative instrumentation operator can succeed. For instance, our motivating example in \autoref{fig:battery_example} satisfies our conditions, which tells us that we successfully apply the instrumentation operator $\maxinstrop{}$ to obtain an instrumented program without extended quantifiers.

\begin{theorem}[Completeness of $\mathcal{P}_{nc}((M, \circ, e), h)$]
\label{theorem:completness-non-cancellative-dg}
For a given target monoid $(M, \circ, e)$ and monoid homomorphism $h: D^*_\sigma \rightarrow M$, let $\Omega = \Omega_{nc}((M, \circ, e), h)$.
%
Then, $\Omega$~is complete for every program in the class~$\mathcal{P}_\mathit{nc}((M, \circ, e),h)$ and~$\mathcal{A}_B$, i.e.,
$ \mathcal{P}_\mathit{nc}((M, \circ, e),h) \subseteq \mathcal{P}_{\Omega_{\mathcal{A}_B}}$.
\end{theorem}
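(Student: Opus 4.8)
The plan is to exhibit a single, explicitly chosen instrumented program $P' \in \Omega_{\mathcal{A}_B}(P)$ and to show that $P'$ is correct whenever $P$ is, following the recipe suggested before \autoref{def:prog-cond-noncanc}: it suffices to construct a $P'$ that contains no aggregation expression and in which none of the \texttt{assert} statements \emph{added} by the instrumentation can fail. Given such a $P'$, weak completeness (\autoref{thm:soundness-weak-completness}) guarantees that the original assertions of $P'$ fail only if they already fail in $P$; since $P$ is assumed correct, neither the original nor the added assertions of $P'$ can fail, so $P'$ is correct and witnesses $\mathsf{correct}(P) \Rightarrow \exists P' \in \Omega_{\mathcal{A}_B}(P).\, \mathsf{correct}(P')$.

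Concretely, I would let $P'$ be the instrumentation of $P$ that rewrites exactly the following statements: every \texttt{select} and \texttt{store} on the distinguished array variable~$\mathtt{a}$, together with the unique aggregation statement $\mathtt{r = aggregate}_{M,h}\mathtt{(a, l, u)}$. This is a legal member of $\Omega(P)$, since the application strategy is free to choose which matching statements to rewrite; I would deliberately leave accesses to any other arrays untouched, so that the shadow $\mathtt{ag\_ar}$ tracks only~$\mathtt{a}$. Because $P$ contains, by condition~1 of \autoref{def:prog-cond-noncanc}, exactly one aggregation statement, and it is rewritten to an assignment of the form $\mathtt{t = ag\_val}$, the program $P'$ contains no aggregation expression, i.e.\ $P' \in \mathcal{A}_B$ and hence $P' \in \Omega_{\mathcal{A}_B}(P)$.

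The technical heart is to verify that no added assertion of $P'$ fails, which I would establish by induction along an arbitrary execution of $P'$. Up to the first added assertion, the rewrites are semantics-preserving (\autoref{def:instr-op}), so the execution projects onto an execution of $P$ on the program variables, and the access discipline of conditions~2--4 applies; in particular these conditions force the accesses to~$\mathtt{a}$ to form a single increasing sweep $l, l+1, \ldots, u-1$. The induction hypothesis is that, after the elements $\mathtt{a}[l], \ldots, \mathtt{a}[l+k-1]$ have been accessed, the ghost state satisfies $\mathtt{ag\_lo} = l$, $\mathtt{ag\_hi} = l + k$, $\mathtt{ag\_ar} = \mathtt{a}$, and $\mathtt{ag\_val} = \mathtt{aggregate}_{M,h}(\mathtt{a}, l, l+k)$. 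The base case $k = 0$ holds by the initial ghost values, where the interval is empty. For the step, the first access (taking $k$ from $0$ to $1$) at index~$l$ finds the empty interval $\mathtt{ag\_lo} = \mathtt{ag\_hi}$ and takes the reset branch, which sets $\mathtt{ag\_ar} = \mathtt{a}$ without asserting anything about the arbitrary initial value of $\mathtt{ag\_ar}$; every later access at index~$l+k = \mathtt{ag\_hi}$ takes the extend-upward branch, whose added assertion $\mathtt{ag\_ar} = \mathtt{a}$ holds because $\mathtt{ag\_ar}$ was set equal to $\mathtt{a}$ at the previous access and, by condition~4, $\mathtt{a}$ is not modified in between. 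The update of $\mathtt{ag\_val}$ is exactly the homomorphism computation justified in the proof of \autoref{theorem:correctness-non-cancellative} via \autoref{eq:definition-aggregate}, re-establishing the hypothesis. When execution reaches the rewritten aggregation statement we have $k = u - l$, so $\mathtt{ag\_lo} = l$, $\mathtt{ag\_hi} = u$ and $\mathtt{ag\_ar} = \mathtt{a}$; hence the bound- and array-equality assertions guarding $\mathtt{t = ag\_val}$ all hold, and $\mathtt{ag\_val} = \mathtt{aggregate}_{M,h}(\mathtt{a}, l, u)$ as required.

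I expect the main obstacle to be the careful per-execution bookkeeping: because $\mathcal{P}_\mathit{nc}$ is a purely \emph{semantic} class, conditions~2--4 constrain only \emph{executions} of $P$, so the induction must be phrased over execution traces of $P'$ and repeatedly appeal to the semantics-preservation of the already-passed rewrites in order to transfer the access discipline from $P$ to $P'$. The subtle points requiring attention are the arbitrary initial value of $\mathtt{ag\_ar}$ (handled by the assertion-free reset on the first access), the uniform treatment of reads and writes in the $\mathtt{ag\_val}$ update, and the possibility that a failing or diverging original assertion truncates the access sequence before the aggregation is reached, in which case the aggregation assertions are simply never evaluated and only the already-discharged extension assertions matter.
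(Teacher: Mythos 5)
Your proposal is correct and follows essentially the same route as the paper: both choose the instrumented program $P'$ in which the unique aggregation statement and \emph{all} \texttt{select}/\texttt{store} operations on \texttt{a} are rewritten, observe $P' \in \mathcal{A}_B$, reduce the claim via weak completeness to showing that no instrumentation-added assertion can fail, and discharge that using conditions 2--4 of \autoref{def:prog-cond-noncanc}. The only difference is presentational -- the paper argues by contradiction with a case analysis on which added assertion failed, whereas you establish the ghost-state invariant directly by induction along the execution trace, which in fact fills in bookkeeping (the sweep invariant and the transfer of the semantic access discipline from $P$ to $P'$) that the paper leaves implicit.
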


\begin{proof}
Let $(M, \circ, e)$ be a target monoid, $h: D^*_\sigma \rightarrow M$ be a monoid homomorphism and $\Omega = \Omega_{nc}((M, \circ, e), h)$.
Let~$P$ be a program satisfying the conditions of Definition~\ref{def:prog-cond-noncanc}, i.e., let $P \in \mathcal{P}_\mathit{nc}$. 
Assume that~$P$ is correct, i.e., that $\mathsf{correct}(P)$ holds.

Let $P' \in \Omega (P)$ be the instrumented program in which the statement $\mathtt{r = \textbackslash aggregate_{M,h}(a, l, u)}$ and all select and store operations have been rewritten.
Clearly, $P' \in \mathcal{A}_B$, and hence $P' \in \Omega_{\mathcal{A}_B} (P)$.


Next assume, for the sake of obtaining a contradiction, that $P'$ is not correct, i.e., that $\neg \mathsf{correct}(P')$ holds. 
By virtue of the definition of~$\Omega$ (see Figure~\ref{fig:noncancel-instr-op}), and by weak completeness of~$\Omega$, one of the following assertions, added by the instrumentation, must have failed:
\begin{enumerate}
\item an \texttt{assert(ag\_ar == a)} in an instrumentation of \texttt{store}, 
\item an \texttt{assert(ag\_ar == a)} in an instrumentation of \texttt{select},  or
\item an \texttt{assert(ag\_ar == a \&\& l == ag\_lo \&\& h == ag\_hi)}.
\end{enumerate}
The next step of the proof is a case analysis on each of the 5 conjuncts. We show here one case, as the other cases follow similarly.

Assume that the assertion in an instrumented \texttt{store} statement failed. This means that the ghost array and actual array must be unequal. Since the interval is non-empty whenever \verb|ag_ar == a|, this means that there must be some assignment to \texttt{a} that does not update the ghost variable. The former contradicts our choice of~$P'$ that says that all store and select operations to \verb|a| have been rewritten. Since the interval is non-empty and we only increase the size of the interval through \verb|store| or \verb|select| statements, the latter implies that there is an assignment to \verb!a! after a \verb|store| or \verb|select|, which contradicts condition 4.

Hence, $P'$ must be correct, i.e., $\mathsf{correct}(P')$ holds, and therefore, by Definition~\ref{def:completeness-dg}, $P \in \mathcal{P}_{\Omega_{\mathcal{A}_B}}$.
\end{proof}

\subsubsection{Completeness Conditions for Cancellative Operators}

We can give a similar characterisation as in \autoref{def:prog-cond-noncanc} also for cancellative operators. As in the previous section, let $\mathcal{A}_B$ be the programs that do not contain aggregation expressions.

\begin{definition}[Class $\mathcal{P}_{c}((M, \circ, e), h)$]
\label{def:prog-cond-canc}
For given cancellative and commutative target monoid $(M, \circ, e)$ and monoid homomorphism $h: D^*_\sigma \rightarrow M$, we define $\mathcal{P}_{c}((M, \circ, e), h)$ as the class of programs~$P$ satisfying the following conditions:
\begin{enumerate}
\item The program~$P$ contains exactly one statement $\mathtt{r = \textbackslash aggregate_{M,h}(a, l, u)}$ involving aggregation, for an array variable~$\typeJudg{\texttt{a}}{\texttt{Array}~\sigma}$.
\item In any execution of $P$, prior to aggregation, each index \verb|i| between \texttt{l} and $\texttt{u} - 1$ is either read using \verb|select(a, i)| or written to using \verb|a = store(a, i, x)|. We distinguish the first operation that writes to a specific index from all subsequent operations and call the first operation that writes to a specific index a \emph{frontier-expanding operation}.
\item The indices that are accessed by frontier-expanding operations occur sequentially, starting at index \texttt{l} and ending at $\texttt{u} - 1$.
\item All other accesses to 
\lstinline!a! only operate on indices which have already been accessed by frontier-expanding operations.
\end{enumerate}
\end{definition}

\begin{theorem}[Completeness of $\mathcal{P}_{c}((M, \circ, e), h)$]
\label{theorem:completness-cancellative-dg}
For given cancellative and commutative target monoid $(M, \circ, e)$ and monoid homomorphism $h: D^*_\sigma \rightarrow M$, let $\Omega = \Omega_{c}((M, \circ, e), h)$.
%
Then, $\Omega$~is complete for the class~$\mathcal{P}_\mathit{c}((M, \circ, e), h)$ and~$\mathcal{A}_B$, i.e., 
$ \mathcal{P}_\mathit{c}((M, \circ, e), h) \subseteq \mathcal{P}_{\Omega_{\mathcal{A}_B}} $.
\end{theorem}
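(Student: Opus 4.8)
The plan is to mirror the proof of \autoref{theorem:completness-non-cancellative-dg}, reducing completeness to a claim about the assertions that $\Omega_c$ introduces, and then exploiting weak completeness (\autoref{thm:soundness-weak-completness}). Concretely, I would fix a cancellative and commutative target monoid $(M, \circ, e)$ and homomorphism $h$, set $\Omega = \Omega_c((M, \circ, e), h)$, take an arbitrary $P \in \mathcal{P}_c((M, \circ, e), h)$, and assume $\mathsf{correct}(P)$. I would then single out the particular instrumented program $P' \in \Omega(P)$ in which the unique $\mathtt{aggregate}$ statement together with every $\mathtt{select}$ and $\mathtt{store}$ on $\mathtt{a}$ has been rewritten. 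Since rewriting the aggregation eliminates the only aggregation expression, $P' \in \mathcal{A}_B$, hence $P' \in \Omega_{\mathcal{A}_B}(P)$. The goal is then to establish $\mathsf{correct}(P')$, from which $P \in \mathcal{P}_{\Omega_{\mathcal{A}_B}}$ follows by \autoref{def:completeness-dg}.

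For the contradiction step I would assume $\neg\mathsf{correct}(P')$. By weak completeness, and since assertions of $P$ are not rewritten, the failing assertion must be one introduced by $\Omega_c$: either an $\mathtt{assert(ag\_ar == a)}$ inside a rewritten $\mathtt{store}$ or $\mathtt{select}$, or the guarding assertion of the rewritten aggregation with conjuncts $\mathtt{ag\_ar == a}$, $\mathtt{l == ag\_lo}$, and $\mathtt{u == ag\_hi}$ (the same five conjuncts as in the non-cancellative case). The array-synchronisation conjunct $\mathtt{ag\_ar == a}$ is discharged uniformly: every access to $\mathtt{a}$ is rewritten in $P'$, each rewrite assigns the freshly stored array to $\mathtt{ag\_ar}$, and by conditions 2 and 4 of \autoref{def:prog-cond-canc} $\mathtt{a}$ is only ever touched through these rewritten accesses; hence $\mathtt{ag\_ar}$ stays equal to $\mathtt{a}$ along every execution, so a failure of this conjunct contradicts the choice of $P'$ or the class conditions.

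The crux is the interval pair $\mathtt{l == ag\_lo}$ and $\mathtt{u == ag\_hi}$, and this is exactly where cancellativity is needed. I would argue by induction over the sequence of accesses to $\mathtt{a}$ that the tracked interval $[\mathtt{ag\_lo}, \mathtt{ag\_hi})$ grows to precisely $[\mathtt{l}, \mathtt{u})$. The first frontier-expanding operation fires from the empty initial interval and, by condition 3, occurs at index $\mathtt{l}$, installing the singleton $[\mathtt{l}, \mathtt{l}+1)$; each subsequent frontier-expanding operation occurs at the current $\mathtt{ag\_hi}$ (again by condition 3), extending the upper bound by one, so after the last such operation the interval is $[\mathtt{l}, \mathtt{u})$. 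The decisive point, absent from the non-cancellative setting, is that the in-interval accesses permitted by condition 4 do \emph{not} reset or shrink the interval: the extra $\mathtt{else~if}$ branch of $\Omega_c$ (\autoref{fig:cancel-op}) updates $\mathtt{ag\_val}$ in place, removing the old element via $\circ^{-1}$ and inserting the new one via $\circ$ while leaving the bounds untouched, whereas the non-cancellative operator would have been forced into its reset branch. Thus none of the resetting branches is ever triggered, and the interval assertion holds.

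I expect the main obstacle to be the careful bookkeeping in this last induction: one must verify that no access ever falls strictly outside $[\mathtt{ag\_lo}-1, \mathtt{ag\_hi}]$ (which would trigger a reset), that the frontier-expanding operations are exactly those advancing the frontier by one, and that reads and in-bounds re-writes leave the bounds invariant while preserving $I_c$ — all of which hinges on reading conditions 2--4 of \autoref{def:prog-cond-canc} precisely and on the cancellation laws guaranteeing that in-place value updates preserve the invariant. Once both the synchronisation and the interval conjuncts are established, every introduced assertion succeeds, contradicting $\neg\mathsf{correct}(P')$; therefore $\mathsf{correct}(P')$ holds and, by \autoref{def:completeness-dg}, $\mathcal{P}_c((M, \circ, e), h) \subseteq \mathcal{P}_{\Omega_{\mathcal{A}_B}}$.
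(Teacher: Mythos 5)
Your proposal is correct and takes essentially the same approach as the paper: the paper's own proof of this theorem is a one-line deferral to the proof of the non-cancellative case (Theorem~\ref{theorem:completness-non-cancellative-dg}), whose structure you reproduce faithfully. Your write-up in fact supplies more detail than the paper does, in particular the induction showing the tracked interval grows exactly to $[\mathtt{l},\mathtt{u})$ and the observation that the extra \texttt{else if} branch of $\Omega_c$ prevents in-interval accesses from resetting the bounds.
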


\begin{proof}
The proof of \autoref{theorem:completness-cancellative-dg} follows closely that of \autoref{theorem:completness-non-cancellative-dg}.
\end{proof}

\section{Instrumentation Application Strategies}
\label{sec:transformation-strategies}



\begin{figure}[tb]
\centering
\begin{minipage}{0.87\textwidth}
\begin{algorithm}[H]
  \SetKwInOut{Input}{Input}
  
  \Input{Program~$P$; statements~$S$; instrumentation space~$R$; oracle~$\mathsf{correct}$.}
  \KwResult{Instrumentation~$r \in R$ with $\mathsf{correct}(P_r)$; $\mathit{Incorrect}$; or $\mathit{Inconclusive}$.}
  
  \Begin{
  $\mathit{Cand} \leftarrow R$\;
  \While{$\mathit{Cand} \not= \emptyset$}{
    pick $r \in \mathit{Cand}$\;
    \uIf{$\mathsf{correct}(P_r)$}{
        \Return{$r$}\;
    }
    \Else{
        $\mathit{cex} \leftarrow $ counterexample path for $P_r$\;
        \uIf{failing assertion in $\mathit{cex}$ also exists in $P$}{
            \tcc{$\mathit{cex}$ is also a counterexample for $P$}
            \Return{$\mathit{Incorrect}$}\;
        }
        \Else{
            \tcc{instrumentation on $\mathit{cex}$ may have been incorrect}
            $C' \leftarrow \{ p \in C \mid \mathit{ins}_r(p) \text{~occurs on~} \mathit{cex} \}$\;
            $\mathit{Cand} \leftarrow \mathit{Cand} \setminus \{ r' \in \mathit{Cand} \mid r(s) = r'(s) \text{~for all~} p \in C' \}$\;
        }
    }
  }
  \Return{$\mathit{Inconclusive}$}\;
  } 
  \caption{Counterexample-guided instrumentation search\figstop{}}
  \label{alg:instrumentation_search}
\end{algorithm}
\end{minipage}
\end{figure}

We will now define a counterexample-guided search procedure to discover applications of instrumentation operators that make it possible to verify a program.

For our algorithm, we assume again that we are given an oracle~$\mathsf{correct}$ that is able to check the correctness of programs after instrumentation. Such an oracle could be approximated, for instance, using a software model checker. The oracle is free to ignore all functions, like aggregation, we are trying to eliminate by instrumentation; for instance, in \autoref{fig:ex-non-linear-arithmetic-code}, the oracle can over-approximate the term~\lstinline!N*N! by assuming that it can have any value. We further assume that $C$ is the set of control points of a program~$P$ corresponding to the statements to which a given set of instrumentation operators can be applied. For each control point~$p \in C$, let $Q(p)$ be the set of rewrite rules applicable to the statement at $p$, including also a distinguished value~$\bot$ that expresses that $p$ is not modified. For the program in \autoref{fig:ex-non-linear-arithmetic-code}, for instance, the choices could be defined by $Q(\mathtt{A}) = Q(\mathtt{B}) = \{ \mathrm{(R1)} , \bot \}$, $Q(\mathtt{C}) = \{ \mathrm{(R2)} , \bot \}$, and $Q(\mathtt{D}) = \{ \mathrm{(R4)} , \bot \}$, referring to the rules in \autoref{fig:squareInstrumentation}. Any function~$r : C \to \bigcup_{p \in C} Q(p)$ with $r(p) \in Q(p)$ will then define one possible program instrumentation. We will denote the set of well-typed functions~$C \to \bigcup_{p \in C} Q(p)$ by $R$, and the program obtained by rewriting~$P$ according to $r \in R$ by $P_r$. We further denote the control point in $P_r$ corresponding to some $p \in C$ in $P$ by $\mathit{ins}_r (p)$.

\autoref{alg:instrumentation_search} presents our algorithm to search for instrumentations that are sufficient to verify a program~$P$. The algorithm maintains a set~$\mathit{Cand} \subseteq R$ of remaining ways to instrument~$P$, and in each loop considers one of the remaining elements~$r \in \mathit{Cand}$ (line~4). If the oracle manages to verify $P_r$ in line~5, due to soundness of instrumentation the correctness of $P$ has been shown (line~6); if $P_r$ is incorrect, there has to be a counterexample ending with a failing assertion (line~8). There are two possible causes of assertion failures: if the failing assertion in $P_r$ already existed in $P$, then due to the weak completeness of instrumentation also $P$ has to be incorrect (line~10). Otherwise, the program instrumentation has to be refined, and for this from $\mathit{Cand}$ we remove all instrumentations~$r'$ that agree with $r$ regarding the instrumentation of the statements occurring in the counterexample (line~13).

Since $R$ is finite, and at least one element of $\mathit{Cand}$ is eliminated in each iteration, the refinement loop terminates. The set~$\mathit{Cand}$ can be exponentially big, however, and therefore should be represented symbolically; using BDDs, or using an SMT solver managing the set of blocking constraints from line~13.

We can observe soundness and completeness of the algorithm w.r.t.\ the considered instrumentation operators. 
\begin{lemma}[Correctness of \autoref{alg:instrumentation_search}]
    \label{lem:searchCorrectness}
    \autoref{alg:instrumentation_search} is sound and, in a certain sense, complete:
    \begin{enumerate}
        \item 
    If \autoref{alg:instrumentation_search} returns an instrumentation~$r \in R$, then $P_r$ and $P$ are correct.
    \item If \autoref{alg:instrumentation_search} returns $\mathit{Incorrect}$, then $P$ is incorrect.
    \item If there is $r \in R$ such that $P_r$ is correct, then \autoref{alg:instrumentation_search} will return $r'$ such that $P_{r'}$ is correct.
    \end{enumerate}
\end{lemma}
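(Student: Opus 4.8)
The plan is to dispatch the two soundness parts (1) and (2) directly from \autoref{thm:soundness-weak-completness}, and to reserve the real argument for the completeness part (3). If the algorithm is driven by several operators at once, I would first fold them into a single operator via \autoref{def:compose}, so that the soundness and weak-completeness results apply verbatim. For part~(1), the algorithm returns an $r$ only at line~6, which is guarded by $\mathsf{correct}(P_r)$, so $P_r$ is correct by construction; correctness of $P$ then follows from soundness, since any failing assertion of $P$ would force a failing assertion of $P_r$. For part~(2), the algorithm returns $\mathit{Incorrect}$ only at line~10, reached when the failing assertion of the counterexample $\mathit{cex}$ already occurs in $P$; as this assertion was not introduced by instrumentation yet fails in $P_r$, weak completeness yields that it also fails in $P$ (the projection of $\mathit{cex}$ onto the program variables being the witnessing trace), so $P$ is incorrect.

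For part~(3), I would fix a correct instrumentation $r^\ast$, so $\mathsf{correct}(P_{r^\ast})$ holds and hence $\mathsf{correct}(P)$ by soundness, and establish the loop invariant that $r^\ast$ is never removed from $\mathit{Cand}$. Initially $r^\ast \in \mathit{Cand} = R$, and the only removal occurs at line~13, which discards every $r'$ agreeing with the current failing $r$ on all of $C' = \{\, p \in C \mid \mathit{ins}_r(p) \text{ occurs on } \mathit{cex} \,\}$. The crux is a path-preservation argument showing $r^\ast$ cannot be discarded: the assertion reached by $\mathit{cex}$ was added by instrumentation (this is the branch taken before line~13), hence sits inside a rewrite block $\mathit{ins}_r(p_0)$ with $p_0 \in C'$, and along $\mathit{cex}$ the only rewrite points $\mathit{ins}_r(p)$ visited are those with $p \in C'$. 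Were $r^\ast$ to agree with $r$ on all of $C'$, then $P_{r^\ast}$ and $P_r$ would carry identical code along the whole of $\mathit{cex}$: the decisions at points outside $C'$ are irrelevant, since those statements are never executed on $\mathit{cex}$ and, by the value- and termination-preservation established for \autoref{thm:soundness-weak-completness}, instrumentation leaves the control flow over the program variables unchanged. Consequently $\mathit{cex}$ would also be an execution of $P_{r^\ast}$ reaching the same failing assertion, contradicting $\mathsf{correct}(P_{r^\ast})$; so some $p \in C'$ has $r^\ast(p) \neq r(p)$ and $r^\ast$ survives the refinement.

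With the invariant in hand, I would conclude with a termination-and-exit analysis. The loop terminates because $R$ is finite and each non-returning iteration strictly shrinks $\mathit{Cand}$. Since $r^\ast$ stays in $\mathit{Cand}$ throughout, $\mathit{Cand}$ never empties, which rules out the $\mathit{Inconclusive}$ exit; and by part~(2) together with $\mathsf{correct}(P)$, the $\mathit{Incorrect}$ exit is impossible. Hence the algorithm must return at line~6 some $r'$ with $\mathsf{correct}(P_{r'})$, which is precisely part~(3). I expect the path-preservation step to be the main obstacle: it requires making precise that the reachability of a counterexample trace depends only on the instrumentation choices at the control points the trace actually visits, which in turn hinges on instrumentation not disturbing the original control flow --- a property I would extract from the value- and termination-preservation already argued in the proof of \autoref{thm:soundness-weak-completness}.
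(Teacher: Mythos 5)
Your proposal is correct and follows essentially the same route as the paper: parts (1) and (2) are discharged via soundness and weak completeness of the operators, and part (3) rests on the observation that line 13 only ever eliminates instrumentations whose induced programs are incorrect, so a correct $r^\ast$ survives every refinement and the finite loop must exit at line 6. Your path-preservation argument for why an eliminated $r'$ must admit the same counterexample is in fact more detailed than the paper's proof, which asserts this elimination property without spelling out the justification.
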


\begin{proof}
\autoref{alg:instrumentation_search} will return an instrumentation~$r$
of when it has derived that $P_r$ is correct; due to the soundness
of instrumentation operators, then also $P$ is correct.

\autoref{alg:instrumentation_search} will return $\mathit{Incorrect}$ only
when it has discovered a counterexample for $P_r$ that ends in a failing assertion
that also occurs in $P$, i.e., that has not been introduced as a part of
instrumentation. Due to the weak completeness of instrumentation operators,
then also $P$ is incorrect.

Assuming that there is an $r$ such that $P_r$ is correct, the correctness
also of $P$ follows. \autoref{alg:instrumentation_search} can then not
return the result $\mathit{Incorrect}$. To see that \autoref{alg:instrumentation_search}
will eventually find some instrumentation~$r'$ such that $P_{r'}$ is correct, note
that the algorithm will in lines~12--13 only eliminate instrumentations~$r''$
such that $P_{r''}$ is incorrect.
\end{proof}



\section{Evaluation}
\label{sec:evaluation}

To evaluate our instrumentation framework, we have implemented the instrumentation operators for quantifiers and aggregation over arrays. We evaluate this implementation on benchmarks taken from SV-COMP, as well as a set of C~programs by ourselves that use 
\ifuseimpl
\lstinline!\min!, \lstinline!\max!, \lstinline!\sum!, \lstinline!\forall!, \lstinline!\exists!, \lstinline!\product! and \lstinline!\numof!. 
\else
\lstinline!\min!, \lstinline!\max!, \lstinline!\sum!, and \lstinline!\forall!.
\fi

\subsection{Implementation}
 Our implementation is done in the setting of constrained Horn clauses (CHCs), by adding the rewrite rules defined in \autoref{sec:instrumentation_operators_for_arrays} to \eldarica~\cite{eldarica}, an open-source solver for CHCs.
We also implemented the automatic application of the instrumentation operators, largely following \autoref{alg:instrumentation_search}, but with a few minor changes due to the CHC setting. The CHC setting makes our implementation available to various CHC-based verification tools, for instance \jayhorn\ (Java) \cite{jayhorn-2017}, \korn\ (C)~\cite{DBLP:conf/tacas/Ernst23}, \rusthorn\ (Rust)~\cite{rusthorn}, \seahorn\ (C/LLVM)~\cite{seahorn}, and \tricera\ (C)~\cite{tricera-fmcad}.


In order to evaluate our approach at the level of C programs, we extended \tricera, an open-source assertion-based model checker that translates C programs into a set of CHCs and relies on \eldarica\ as back-end solver. \tricera\ is extended to parse quantifiers and aggregation operators in C programs and to encode them as part of the translation into CHCs. We call the resulting tool chain \monocera. \ifuseimpl The implementation is based on our previous work~\cite{amilon-et-al-cav23}, but has been updated with, e.g., a better interface for defining new instrumentation operators. \fi 
An artefact that includes \monocera\ and the benchmarks is available online~\cite{artifact}\ifuseimpl \footnote{Note that this artefact contains the benchmarks and implementation of our previous paper~\cite{amilon-et-al-cav23}. In case of acceptance, we plan to update with a link to the new version of the implementation and the new benchmarks.}\fi.

To handle complicated access patterns, for instance a program processing an array simultaneously from the beginning and from the end, the implementation can simultaneously apply multiple instrumentation operators; the number of operators
is incremented when \autoref{alg:instrumentation_search} returns \emph{Inconclusive}.




\subsection{Experiments and Comparisons}
To assess our implementation, we assembled a test suite and carried out experiments comparing \monocera\ with the state-of-the-art C~model checkers \cpachecker~2.1.1~\cite{cpachecker}, \seahorn~10.0.0~\cite{seahorn} and \tricera~0.2.
It should be noted that deductive verification frameworks, such as Dafny and Frama-C, can handle, for example, the program in \autoref{fig:quantified_example}, if they are provided with a manually written loop invariant. Since \monocera{} relies on automatic techniques for invariant inference, we only benchmark against tools using similar automatic techniques. We also excluded \veriabs~\cite{DBLP:conf/tacas/AfzalCCCDGKMUV20}, since its licence does not permit its use for scientific evaluation. 

The tools were set up, as far as possible, with equivalent configurations; for instance, to use the SMT-LIB theory of arrays~\cite{smtlib} in order to model C arrays, and a mathematical (as opposed to machine-based) semantics of integers. \cpachecker\ was configured to use $k$-induction~\cite{DBLP:conf/cav/0001DW15}, which was the only configuration that worked in our tests using mathematical integers. 
\seahorn\ was run using the default settings. 
\ifuseimpl
All tests were run on a Linux machine with  Intel Xeon CPU E5-2630 v4 @ 2.20GHz and a timeout of 900 seconds.
\else
All tests were run on a Linux machine 
with AMD Opteron 2220 SE @ 2.8 GHz and 6 GB RAM with a timeout of 300 seconds.
\fi
\paragraph{Test Suite}
The comparison includes a set of programs calculating properties related to the quantification and aggregation properties over arrays. The benchmarks and verification results are summarised in \autoref{tab:results_monocera}. 
The benchmark suite contains programs ranging from 16 to 117 LOC, and is comprised of two parts:
\begin{inparaenum}[(i)]
    \item 117~programs taken from the SV-COMP repository~\cite{sv-benchmarks-2022}, and
    \ifuseimpl
    \item 35~programs crafted by the authors  (\lstinline!\min!: 6, \lstinline!\max!: 8, \lstinline!\sum!: 9, \lstinline!\forall!: 3, \lstinline!\exists!: 3,  \lstinline!\product!: 3, \lstinline!\numof!: 3).
    \else
    \item 26~programs crafted by the authors  (\lstinline!\min!: 6, \lstinline!\max!: 8, \lstinline!\sum!: 9, \lstinline!\forall!: 3).
    \fi
\end{inparaenum} 
\ifuseimpl The suite is based on our previous work~\cite{amilon-et-al-cav23}, but extended with (crafted) benchmarks for \lstinline!\exists!, \lstinline!\product! and \lstinline!\numof!.\fi

To construct the SV-COMP benchmark set for \monocera, we gathered all test files from the directories prefixed with \lstinline$array$ or \lstinline$loop$, and singled out programs containing some  assert statement that could be rewritten using a quantifier or an aggregation operator over a single array. For example, loops of the shape:
\begin{lstlisting}[style=ExtWhileStyleOp, basicstyle=\ttfamily\small,]
for (int i = 0; i < N; i++)
  assert(a[i] <= 0)
\end{lstlisting}
can be rewritten using \lstinline!\forall! or \lstinline!\max!.
We created a benchmark for each possible rewriting; for instance,
in the case of \lstinline!\max!, by rewriting the loop into
\begin{lstlisting}[style=ExtWhileStyleOp]
assert(\max(a, 0, N) <= 0)    
\end{lstlisting}
%
The original benchmarks were used for the evaluation of the other tools, none of which supported (extended) quantifiers.
%
%

In (ii), we crafted 12 programs that make use of aggregation or quantifiers, and derived further benchmarks by considering different array sizes (10, 100 and unbounded size). One combination (unbounded array inside a struct) had to be excluded, as it is not valid C. In order to evaluate other tools on our crafted benchmarks, we reversed the process described for the SV-COMP benchmarks and translated the operators into corresponding loop constructs.

\iftrue
\begin{table}[tb]
\centering
\begin{minipage}{0.87\textwidth}
\footnotesize
\setlength{\belowrulesep }{0pt}
\setlength{\aboverulesep }{0pt}
    \begin{tabular}{
    @{}
    l@{\hspace{0pt}}
    c@{\hspace{6pt}}
    c
    c
    c
    cc@{\hspace{12pt}}
    c
    c
    c
    c@{\hspace{8pt}}
    c@{\hspace{2pt}}
    c
    c@{\hspace{8pt}}
    c
    c
    }
& \multicolumn{5}{c}{\textbf{Verification results}}&&
\multicolumn{3}{@{\hspace{0pt}}c}{\textbf{Ver. time}} && \multicolumn{2}{@{\hspace{-2pt}}c}{\textbf{Inst. space}} && \multicolumn{2}{@{\hspace{-4pt}}c}{\textbf{Inst. steps}}
\\ \toprule
&  {\#Tests} & \monoc & \tri & \sea & \cpa  && Min & Max & Avg && Max & Avg && Max & Avg    
\\ \cmidrule(r{4pt}){2-2}\cmidrule(l{0pt}){3-6}\cmidrule{8-10}\cmidrule(r{3pt}){12-13}\cmidrule(r{3pt}){15-16}
\verb!\min!     & 17  & 9  & 2 & 2 & 2 && 22 & 59  &  33 && 27 & 11 && 55 & 24 
\\
\verb!\max!  & 12   & 8 & 2 & 3 & 3  && 21  & 285 & 76 && 108 & 21 && 96 & 30 
\\
\verb!\sum!  & 26  & 16 & 3 & 3 & 3 && 26   & 245  & 78 && 2916 & 188 && 284 & 36 
\\
\verb!\forall! & 96 & 30 & 1 & 0 & 2 && 14  & 236 & 91 && 59049 & 2446 && 334 & 59 \\
\bottomrule
    \end{tabular}
    \caption{Results for \monocera\ (\monoc), \tricera\ (\tri), \seahorn\ (\sea), and \cpachecker\ (\cpa). For \monocera, also statistics are given for verification time (s), size of the instrumentation search space, and search iterations.}
    \label{tab:results_monocera}
    \vspace*{-3ex}
\end{minipage}
\end{table}
\fi

\paragraph{Results}
In \autoref{tab:results_monocera}, we present the number of verified programs per instrumentation operator for each tool, as well as further statistics for \monocera\ regarding verification times and instrumentation search space. The ``Inst. space'' column indicates the size of the instrumentation search space (i.e., number of instrumentations producible by applying the non-deterministic instrumentation operator). The ``Inst. steps'' column indicates the number of attempted instrumentations, i.e., number of iterations in the while-loop in \autoref{alg:instrumentation_search}. In our implementation, the check in \autoref{alg:instrumentation_search}, line~5, can time out and cause the check to be repeated at a later time with a greater timeout, which can lead to more iterations than the size of the search space.
In 
Appendix~\ref{app:evaluation_results}, 
we list results per benchmark for each tool.

For the SV-COMP benchmarks, \cpachecker\ managed to verify 1 program, while \seahorn\ and \tricera\ could not verify any programs. \monocera\ verified in total 42 programs from SV-COMP. 
Regarding the crafted benchmarks, several tools could verify the examples with array size 10. However, when the array size was 100 or unbounded, only \monocera~succeeded. This is because only \monocera\ could infer the quantified invariants needed to prove unbounded safety  of those benchmarks.

\section{Related Work}

It is common practice, in both model checking and deductive verification, to translate high-level specifications to low-level specifications prior to verification (e.g., \cite{DBLP:reference/mc/2018,cell2010,DBLP:conf/sas/BjornerMR13,DBLP:conf/sas/MonniauxG16}). Such translations often make use of ghost variables and ghost code, although relatively little systematic research has been done on the required properties of ghost code~\cite{DBLP:journals/fmsd/FilliatreGP16}. The addition of ghost variables to a program for tracking the value of complex expressions also has similarities with the concept of term abstraction in Horn solving~\cite{DBLP:conf/lpar/AlbertiBGRS12}. To the best of our knowledge, we are presenting the first general framework for automatic program instrumentation.

A lot of research in \emph{software model checking} considered the handling of standard quantifiers~$\forall, \exists$ over arrays. In the setting of constrained Horn clauses, properties with universal quantifiers can sometimes be reduced to quantifier-free reasoning over non-linear Horn clauses~\cite{DBLP:conf/sas/BjornerMR13,DBLP:conf/sas/MonniauxG16}. Our approach follows the same philosophy of applying an up-front program transformation, but in a more general setting.
Various direct approaches to infer quantified array invariants have been proposed as well: e.g., by extending the IC3 algorithm~\cite{DBLP:conf/atva/GurfinkelSV18}, syntax-guided synthesis~\cite{DBLP:conf/cav/FedyukovichPMG19}, learning~\cite{DBLP:conf/cav/0001LMN13}, by solving recurrence equations~\cite{DBLP:conf/lpar/HenzingerHKR10},
backward reachability~\cite{DBLP:conf/lpar/AlbertiBGRS12}, or superposition~\cite{DBLP:conf/fmcad/GeorgiouGK20}. To the best of our knowledge, such methods have not been extended to aggregation.

\emph{Deductive verification} tools usually have rich support for quantified specifications, but rely on auxiliary assertions like loop invariants provided by the user, and on SMT solvers or automated theorem provers for quantifier reasoning.
Although several deductive verification tools can parse extended quantifiers, few offer support for reasoning about them. Our work is closest to the method for handling comprehension operators in Spec$\#$~\cite{DBLP:conf/sac/LeinoM09}, which relies on code annotations provided by the user, but provides heuristics to automatically verify such annotations. The code instrumentation presented in this article has similarity with the proof rules in Spec$\#$; the main differences are that our method is based on an upfront program transformation, and that we aim at automatically finding required program invariants, as opposed to only verifying their correctness. The KeY tool provides proof rules similar to the ones in Spec$\#$ for some of the JML extended quantifiers~\cite{DBLP:series/lncs/10001}; those proof rules can be applied manually to verify human-written invariants. The Frama-C system~\cite{DBLP:conf/sefm/CuoqKKPSY12} can parse ACSL extended quantifiers~\cite{acsl}, but, to the best of our knowledge, none of the Frama-C plugins can automatically process such quantifiers. Other systems, e.g., Dafny~\cite{dafny}, require users to manually define aggregation operators as recursive functions.

In the theory of \emph{algebraic data-types}, several transformation-based approaches have been proposed to verify properties that involve recursive functions or catamorphisms~\cite{DBLP:journals/pacmpl/KSG22,DBLP:journals/tplp/AngelisPFP22}. Aggregation over arrays resembles the evaluation of recursive functions over data-types; a major difference is that data-types are more restricted with respect to accessing and updating data than arrays.

Array folds logic (AFL)~\cite{DBLP:conf/cav/DacaHK16} is a decidable logic in which properties on arrays beyond standard quantification can be expressed: for instance, counting the number of elements with some property. Similar properties can be expressed using automata on data words~\cite{DBLP:conf/csl/Segoufin06}, or in variants of monadic second-order logic~\cite{DBLP:journals/tocl/NevenSV04}. Such languages can be seen as alternative formalisms to aggregation or extended quantifiers; they do not cover, however, all kinds of aggregation we are interested in. Array sums cannot be expressed in AFL or data automata, for instance.

\section{Conclusion}
\label{sec:conclusion}

We have presented a framework for automatic and provably correct program instrumentation, allowing the automatic verification of programs containing certain expressive language constructs, which are not directly supported by the existing automatic verification tools. 
Our experiments with a prototypical implementation, in the tool \monocera, show that our method is able to automatically verify a significant number of benchmark programs involving quantification and aggregation over arrays that are beyond the scope of other tools.

There are still various other benchmarks that \monocera\ (as well as other tools) cannot verify. We believe that many of those benchmarks are in reach of our method, because of the generality of our approach.
Ghost code is known to be a powerful
specification mechanism; similarly, in our setting, more powerful instrumentation operators can be easily formulated for specific kinds of programs.
In future work, we therefore plan to develop a library of instrumentation operators for different language constructs  (including arithmetic operators), non-linear arithmetic,
other types of structures with regular access patterns
such as binary heaps,
and general linked-data structures.
We also plan to refine our method for showing incorrectness of programs more efficiently, as the approach is currently applicable mainly for verifying correctness.

\bibliographystyle{plain}
\bibliography{references}

\newpage
\begin{appendices}


\section{Detailed Evaluation Results}
\label{app:evaluation_results}
We list all evaluated benchmarks in \autoref{tbl:per-benchmark-results-forall} (\emph{forall}), \autoref{tbl:per-benchmark-results-max} (\emph{max}), \autoref{tbl:per-benchmark-results-min} (\emph{min}) and \autoref{tbl:per-benchmark-results-sum} (\emph{sum}) for all evaluated tools: \monocera\ (\monoc), \tricera\ (\tri), \seahorn\ (\sea), and \cpachecker\ (\cpa). The list includes mostly SV-COMP benchmarks and the first column indicates their original directories (should be prepended with ``\texttt{c/}''). There are some crafted benchmarks that are not from SV-COMP; for these we only state the name of the benchmark (for instance \texttt{forall1-10}).

The expected result for all benchmarks in the tables is \textbf{\textcolor{green!50!black}{True}}. Durations are given next to the returned result for benchmarks which did not time out. 

On top of quantifying over arrays, many of the benchmarks have assertions over nonlinear integer arithmetic (NIA). Decision procedures for NIA are incomplete, and we have seen errors in the presence of such assertions in some of the evaluated tools.

The result ``Unknown'' is returned by \monocera\ when the instrumentation search space is exhausted.

{
\scriptsize
\renewcommand{\arraystretch}{.5}

  \begin{longtable}{lrrrr}
  \caption{\emph{forall} benchmark results for all tools. Timeout (T/O)
  is 300.0 s.}
   \label{tbl:per-benchmark-results-forall}\\
      & \cpa & \sea & \tri & \monoc\\\toprule
 \endfirsthead
      & \cpa & \sea & \tri & \monoc\\\toprule
 \endhead
     array-cav19/array\_doub\_access\_init\_const.c & T/O & T/O & T/O & \textbf{\textcolor{green!50!black}{True}} (111) \\\midrule
		array-cav19/array\_init\_nondet\_vars.c & T/O & T/O & T/O & \textbf{\textcolor{green!50!black}{True}} (14) \\\midrule
		array-cav19/array\_init\_pair\_sum\_const.c & T/O & T/O & T/O & T/O \\\midrule
		array-cav19/array\_init\_pair\_symmetr.c & T/O & T/O & T/O & T/O \\\midrule
		array-cav19/array\_init\_pair\_symmetr2.c & T/O & T/O & T/O & T/O \\\midrule
		array-cav19/array\_init\_var\_plus\_ind.c & T/O & T/O & Error (2) & \textbf{\textcolor{green!50!black}{True}} (23) \\\midrule
		array-cav19/array\_init\_var\_plus\_ind2.c & T/O & T/O & Error (2) & Unknown (109) \\\midrule
		array-cav19/array\_init\_var\_plus\_ind3.c & T/O & T/O & Error (3) & \textbf{\textcolor{green!50!black}{True}} (60) \\\midrule
		array-industry-pattern/array\_ptr\_partial\_init.c & T/O & T/O & T/O & T/O \\\midrule
		array-industry-pattern/array\_shadowinit.c & T/O & T/O & T/O & \textbf{\textcolor{green!50!black}{True}} (19) \\\midrule
		array-cav19/array\_tiling\_poly6.c & T/O & Error (0) & T/O & T/O \\\midrule
		array-cav19/array\_tripl\_access\_init\_const.c & T/O & T/O & T/O & \textbf{\textcolor{green!50!black}{True}} (199) \\\midrule
		array-fpi/condg.c & T/O & T/O & T/O & T/O \\\midrule
		array-fpi/condm.c & T/O & T/O & T/O & Error (8) \\\midrule
		array-fpi/condn.c & T/O & T/O & T/O & \textbf{\textcolor{green!50!black}{True}} (88) \\\midrule
		array-fpi/eqn1.c & T/O & Error (0) & T/O & T/O \\\midrule
		array-fpi/eqn2.c & T/O & Error (0) & T/O & T/O \\\midrule
		array-fpi/eqn3.c & T/O & Error (0) & T/O & T/O \\\midrule
		array-fpi/eqn4.c & T/O & Error (0) & T/O & T/O \\\midrule
		array-fpi/eqn5.c & T/O & Error (0) & T/O & T/O \\\midrule
		forall1-10 & \textbf{\textcolor{green!50!black}{True}} (23) & Error (0) & \textbf{\textcolor{green!50!black}{True}} (55) & \textbf{\textcolor{green!50!black}{True}} (37) \\\midrule
		forall1-100 & T/O & Error (2) & T/O & \textbf{\textcolor{green!50!black}{True}} (31) \\\midrule
		forall1-UB & T/O & Error (0) & T/O & \textbf{\textcolor{green!50!black}{True}} (32) \\\midrule
		array-fpi/ifcomp.c & T/O & Error (0) & T/O & T/O \\\midrule
		array-fpi/ifeqn1.c & T/O & Error (0) & T/O & T/O \\\midrule
		array-fpi/ifeqn2.c & T/O & Error (0) & T/O & T/O \\\midrule
		array-fpi/ifeqn3.c & T/O & Error (0) & T/O & T/O \\\midrule
		array-fpi/ifeqn4.c & T/O & Error (0) & T/O & T/O \\\midrule
		array-fpi/ifeqn5.c & T/O & Error (0) & T/O & T/O \\\midrule
		array-fpi/ifncomp.c & Error (20) & Error (0) & T/O & T/O \\\midrule
		array-fpi/indp1.c & T/O & Error (0) & T/O & T/O \\\midrule
		array-fpi/indp2.c & T/O & Error (0) & T/O & T/O \\\midrule
		array-fpi/indp3.c & T/O & Error (0) & T/O & T/O \\\midrule
		array-fpi/indp4.c & T/O & Error (0) & T/O & T/O \\\midrule
		array-tiling/mbpr2.c & T/O & T/O & T/O & T/O \\\midrule
		array-tiling/mbpr3.c & T/O & T/O & T/O & T/O \\\midrule
		array-tiling/mbpr4.c & T/O & T/O & T/O & T/O \\\midrule
		array-tiling/mbpr5.c & T/O & T/O & T/O & T/O \\\midrule
		loop-lit/mcmillan2006.c & \textbf{\textcolor{green!50!black}{True}} (7) & T/O & T/O & \textbf{\textcolor{green!50!black}{True}} (19) \\\midrule
		array-tiling/mlceu2.c & T/O & T/O & Error (3) & Error (4) \\\midrule
		array-fpi/modn.c & Error (20) & Error (0) & T/O & T/O \\\midrule
		array-fpi/modp.c & Error (17) & Error (0) & T/O & Error (9) \\\midrule
		array-fpi/mods.c & T/O & Error (0) & T/O & T/O \\\midrule
		array-fpi/ncomp.c & T/O & Error (0) & T/O & T/O \\\midrule
		array-tiling/nr2.c & T/O & T/O & T/O & \textbf{\textcolor{green!50!black}{True}} (93) \\\midrule
		array-tiling/nr3.c & T/O & T/O & T/O & \textbf{\textcolor{green!50!black}{True}} (123) \\\midrule
		array-tiling/nr4.c & T/O & T/O & T/O & \textbf{\textcolor{green!50!black}{True}} (156) \\\midrule
		array-tiling/nr5.c & T/O & T/O & T/O & \textbf{\textcolor{green!50!black}{True}} (223) \\\midrule
		array-fpi/nsqm.c & T/O & Error (0) & T/O & T/O \\\midrule
		array-fpi/nsqm-if.c & Error (15) & Error (0) & T/O & T/O \\\midrule
		array-lopstr16/partial\_lesser\_bound.c & T/O (108) & T/O & T/O & T/O \\\midrule
		array-lopstr16/partial\_lesser\_bound-1.c & T/O & Error (1) & Error (2) & \textbf{\textcolor{green!50!black}{True}} (89) \\\midrule
		array-fpi/pcomp.c & T/O & Error (0) & T/O & T/O \\\midrule
		array-tiling/pnr2.c & T/O & T/O & T/O & \textbf{\textcolor{green!50!black}{True}} (227) \\\midrule
		array-tiling/pnr3.c & T/O & T/O & T/O & \textbf{\textcolor{green!50!black}{True}} (236) \\\midrule
		array-tiling/pnr4.c & T/O & T/O & T/O & T/O \\\midrule
		array-tiling/pnr5.c & T/O & T/O & T/O & T/O \\\midrule
		array-tiling/poly1.c & T/O & Error (0) & T/O & \textbf{\textcolor{green!50!black}{True}} (25) \\\midrule
		array-tiling/poly2.c & T/O & Error (0) & T/O & T/O \\\midrule
		array-tiling/pr2.c & T/O & Error (0) & T/O & \textbf{\textcolor{green!50!black}{True}} (117) \\\midrule
		array-tiling/pr3.c & T/O & Error (0) & T/O & \textbf{\textcolor{green!50!black}{True}} (143) \\\midrule
		array-tiling/pr4.c & T/O & Error (0) & T/O & T/O \\\midrule
		array-tiling/pr5.c & T/O & Error (0) & T/O & T/O \\\midrule
		array-tiling/rew.c & T/O & T/O & T/O & T/O \\\midrule
		array-tiling/rewnif.c & T/O & T/O & T/O & \textbf{\textcolor{green!50!black}{True}} (55) \\\midrule
		array-tiling/rewnifrev.c & T/O & T/O & T/O & \textbf{\textcolor{green!50!black}{True}} (60) \\\midrule
		array-tiling/rewnifrev2.c & T/O & T/O & T/O & \textbf{\textcolor{green!50!black}{True}} (97) \\\midrule
		array-examples/sanfoundry\_27\_ground.c & T/O & T/O & T/O & Error (19) \\\midrule
		loop-crafted/simple\_array\_index\_value\_2.c & T/O & Error (1) & Error (2) & \textbf{\textcolor{green!50!black}{True}} (39) \\\midrule
		loop-crafted/simple\_array\_index\_value\_3.c & T/O & Error (1) & Error (3) & \textbf{\textcolor{green!50!black}{True}} (33) \\\midrule
		array-fpi/sina1.c & T/O & T/O & T/O & \textbf{\textcolor{green!50!black}{True}} (199) \\\midrule
		array-fpi/sina2.c & T/O & T/O & T/O & T/O \\\midrule
		array-fpi/sina3.c & T/O & T/O & T/O & T/O \\\midrule
		array-fpi/sina4.c & T/O & T/O & T/O & T/O \\\midrule
		array-fpi/sina5.c & T/O & T/O & T/O & T/O \\\midrule
		array-tiling/skipped.c & T/O & T/O & T/O & T/O \\\midrule
		array-fpi/sqm.c & T/O & Error (0) & T/O & T/O \\\midrule
		array-fpi/sqm-if.c & T/O & Error (0) & T/O & T/O \\\midrule
		array-fpi/ss2.c & Error (65) & Error (0) & T/O & T/O \\\midrule
		array-fpi/ssina.c & Error (38) & Error (0) & T/O & T/O \\\midrule
		array-examples/standard\_copyInitSum2\_ground-2.c & T/O & T/O & T/O & T/O \\\midrule
		array-examples/standard\_copyInitSum3\_ground.c & T/O & T/O & T/O & T/O \\\midrule
		array-examples/standard\_copyInit\_ground.c & T/O & T/O & T/O & T/O \\\midrule
		array-examples/standard\_init1\_ground-2.c & T/O & T/O & T/O & \textbf{\textcolor{green!50!black}{True}} (32) \\\midrule
		array-examples/standard\_init2\_ground-2.c & T/O & T/O & T/O & \textbf{\textcolor{green!50!black}{True}} (82) \\\midrule
		array-examples/standard\_init3\_ground-2.c & T/O & T/O & T/O & T/O \\\midrule
		array-examples/standard\_init4\_ground-2.c & T/O & T/O & T/O & T/O \\\midrule
		array-examples/standard\_init5\_ground-1.c & T/O & T/O & T/O & T/O \\\midrule
		array-examples/standard\_init6\_ground-2.c & T/O & T/O & T/O & T/O \\\midrule
		array-examples/standard\_init7\_ground-2.c & T/O & T/O & T/O & T/O \\\midrule
		array-examples/standard\_init8\_ground-2.c & T/O & T/O & T/O & T/O \\\midrule
		array-examples/standard\_init9\_ground-2.c & T/O & T/O & T/O & T/O \\\midrule
		array-examples/standard\_maxInArray\_ground.c & T/O & T/O & T/O & T/O \\\midrule
		array-examples/standard\_minInArray\_ground-2.c & T/O & T/O & T/O & T/O \\\midrule
		array-examples/standard\_partition\_ground-2.c & T/O & T/O & T/O & T/O \\\midrule
		array-examples/standard\_vararg\_ground.c & T/O & T/O & T/O & \textbf{\textcolor{green!50!black}{True}} (57)\\\bottomrule
  \end{longtable}

  \begin{longtable}{lrrrr}
  \caption{\emph{max} benchmark results for all tools. Timeout (T/O)
  is 300.0 s.}
   \label{tbl:per-benchmark-results-max}\\
      & \cpa & \sea & \tri & \monoc\\\toprule
 \endfirsthead
      & \cpa & \sea & \tri & \monoc\\\toprule
 \endhead
     array-cav19/array\_init\_var\_plus\_ind3.c & T/O & T/O & Error (2) & Unknown (277) \\\midrule
		battery\_diag-10 & \textbf{\textcolor{green!50!black}{True}} (155) & \textbf{\textcolor{green!50!black}{True}} (0) & T/O & T/O \\\midrule
		battery\_diag-100 & T/O & T/O & T/O & T/O \\\midrule
		array-fpi/condn.c & T/O & T/O & T/O & \textbf{\textcolor{green!50!black}{True}} (129) \\\midrule
		max\_eq-10 & \textbf{\textcolor{green!50!black}{True}} (21) & \textbf{\textcolor{green!50!black}{True}} (0) & \textbf{\textcolor{green!50!black}{True}} (78) & \textbf{\textcolor{green!50!black}{True}} (21) \\\midrule
		max\_eq-100 & T/O & T/O & T/O & \textbf{\textcolor{green!50!black}{True}} (38) \\\midrule
		max\_eq-UB & T/O & T/O & T/O & \textbf{\textcolor{green!50!black}{True}} (27) \\\midrule
		max\_leq-10 & \textbf{\textcolor{green!50!black}{True}} (25) & \textbf{\textcolor{green!50!black}{True}} (0) & \textbf{\textcolor{green!50!black}{True}} (90) & \textbf{\textcolor{green!50!black}{True}} (23) \\\midrule
		max\_leq-100 & T/O & T/O & T/O & \textbf{\textcolor{green!50!black}{True}} (30) \\\midrule
		max\_leq-UB & T/O & T/O & T/O & \textbf{\textcolor{green!50!black}{True}} (52) \\\midrule
		array-examples/sanfoundry\_27\_ground.c & T/O & T/O & T/O & \textbf{\textcolor{green!50!black}{True}} (285) \\\midrule
		array-examples/standard\_maxInArray\_ground.c & T/O & T/O & T/O & T/O\\\bottomrule
  \end{longtable}

  \begin{longtable}{lrrrr}
  \caption{\emph{min} benchmark results for all tools. Timeout (T/O)
  is 300.0 s.}
   \label{tbl:per-benchmark-results-min}\\
      & \cpa & \sea & \tri & \monoc\\\toprule
 \endfirsthead
      & \cpa & \sea & \tri & \monoc\\\toprule
 \endhead
     array-cav19/array\_doub\_access\_init\_const.c & T/O & T/O & T/O & T/O \\\midrule
		array-cav19/array\_init\_pair\_sum\_const.c & T/O & T/O & T/O & T/O \\\midrule
		array-cav19/array\_init\_pair\_symmetr2.c & T/O & T/O & T/O & T/O \\\midrule
		array-cav19/array\_init\_var\_plus\_ind.c & T/O & T/O & Error (3) & Unknown (67) \\\midrule
		array-cav19/array\_init\_var\_plus\_ind2.c & T/O & T/O & Error (2) & Unknown (64) \\\midrule
		array-cav19/array\_tripl\_access\_init\_const.c & T/O & T/O & T/O & T/O \\\midrule
		min\_eq-10 & \textbf{\textcolor{green!50!black}{True}} (25) & \textbf{\textcolor{green!50!black}{True}} (0) & \textbf{\textcolor{green!50!black}{True}} (61) & \textbf{\textcolor{green!50!black}{True}} (28) \\\midrule
		min\_eq-100 & T/O & T/O & T/O & \textbf{\textcolor{green!50!black}{True}} (22) \\\midrule
		min\_eq-UB & T/O & T/O & T/O & \textbf{\textcolor{green!50!black}{True}} (25) \\\midrule
		min\_geq-10 & \textbf{\textcolor{green!50!black}{True}} (26) & \textbf{\textcolor{green!50!black}{True}} (0) & \textbf{\textcolor{green!50!black}{True}} (51) & \textbf{\textcolor{green!50!black}{True}} (26) \\\midrule
		min\_geq-100 & T/O & T/O & T/O & \textbf{\textcolor{green!50!black}{True}} (28) \\\midrule
		min\_geq-UB & T/O & T/O & T/O & \textbf{\textcolor{green!50!black}{True}} (24) \\\midrule
		array-tiling/rew.c & T/O & T/O & T/O & \textbf{\textcolor{green!50!black}{True}} (40) \\\midrule
		array-tiling/rewnifrev.c & T/O & T/O & T/O & \textbf{\textcolor{green!50!black}{True}} (41) \\\midrule
		array-examples/standard\_minInArray\_ground-2.c & T/O & T/O & T/O & T/O \\\midrule
		array-examples/standard\_partition\_ground-2.c & T/O & T/O & T/O & T/O \\\midrule
		array-examples/standard\_vararg\_ground.c & T/O & T/O & T/O & \textbf{\textcolor{green!50!black}{True}} (59)\\\bottomrule
  \end{longtable}

  \begin{longtable}{lrrrr}
  \caption{\emph{sum} benchmark results for all tools. Timeout (T/O)
  is 300.0 s.}
   \label{tbl:per-benchmark-results-sum}\\
      & \cpa & \sea & \tri & \monoc\\\toprule
 \endfirsthead
      & \cpa & \sea & \tri & \monoc\\\toprule
 \endhead
     array-fpi/brs1.c & T/O & T/O & T/O & \textbf{\textcolor{green!50!black}{True}} (26) \\\midrule
		array-fpi/brs2.c & T/O & Error (0) & T/O & \textbf{\textcolor{green!50!black}{True}} (34) \\\midrule
		array-fpi/brs3.c & T/O & T/O & T/O & \textbf{\textcolor{green!50!black}{True}} (41) \\\midrule
		array-fpi/brs4.c & T/O & T/O & T/O & \textbf{\textcolor{green!50!black}{True}} (38) \\\midrule
		array-fpi/brs5.c & T/O & T/O & T/O & \textbf{\textcolor{green!50!black}{True}} (58) \\\midrule
		array-fpi/conda.c & T/O & T/O & T/O & T/O \\\midrule
		array-fpi/indp5.c & Error (17) & Error (0) & T/O & T/O \\\midrule
		array-fpi/ms1.c & T/O & T/O & T/O & \textbf{\textcolor{green!50!black}{True}} (26) \\\midrule
		array-fpi/ms2.c & T/O & T/O & T/O & \textbf{\textcolor{green!50!black}{True}} (46) \\\midrule
		array-fpi/ms3.c & T/O & T/O & T/O & \textbf{\textcolor{green!50!black}{True}} (31) \\\midrule
		array-fpi/ms4.c & T/O & T/O & T/O & \textbf{\textcolor{green!50!black}{True}} (33) \\\midrule
		array-fpi/ms5.c & T/O & T/O & T/O & \textbf{\textcolor{green!50!black}{True}} (32) \\\midrule
		array-fpi/s1lif.c & T/O & T/O & T/O & T/O \\\midrule
		array-fpi/s2lif.c & T/O & T/O & T/O & T/O \\\midrule
		array-fpi/s3lif.c & T/O & T/O & T/O & T/O \\\midrule
		array-fpi/s4lif.c & T/O & T/O & T/O & T/O \\\midrule
		array-fpi/s5lif.c & T/O & T/O & T/O & T/O \\\midrule
		sum\_eq-10 & \textbf{\textcolor{green!50!black}{True}} (50) & \textbf{\textcolor{green!50!black}{True}} (0) & \textbf{\textcolor{green!50!black}{True}} (45) & \textbf{\textcolor{green!50!black}{True}} (245) \\\midrule
		sum\_eq-100 & T/O & T/O & T/O & T/O \\\midrule
		sum\_eq-UB & T/O & T/O & T/O & \textbf{\textcolor{green!50!black}{True}} (55) \\\midrule
		sum\_geq-10 & \textbf{\textcolor{green!50!black}{True}} (45) & \textbf{\textcolor{green!50!black}{True}} (0) & \textbf{\textcolor{green!50!black}{True}} (98) & \textbf{\textcolor{green!50!black}{True}} (200) \\\midrule
		sum\_geq-100 & T/O & T/O & T/O & \textbf{\textcolor{green!50!black}{True}} (221) \\\midrule
		sum\_geq-UB & T/O & T/O & T/O & \textbf{\textcolor{green!50!black}{True}} (59) \\\midrule
		two\_arrays\_sum-10 & \textbf{\textcolor{green!50!black}{True}} (26) & \textbf{\textcolor{green!50!black}{True}} (0) & \textbf{\textcolor{green!50!black}{True}} (46) & T/O \\\midrule
		two\_arrays\_sum-100 & T/O & T/O & T/O & T/O \\\midrule
		two\_arrays\_sum-UB & T/O & T/O & T/O & \textbf{\textcolor{green!50!black}{True}} (106)\\\bottomrule
  \end{longtable}

}
\end{appendices}

\end{document}
\typeout{get arXiv to do 4 passes: Label(s) may have changed. Rerun}